\documentclass[10pt,journal,twocolumn]{IEEEtran}
\usepackage{cite,array}
\usepackage{amsmath,amssymb,amsfonts}
\usepackage{algorithmic}
\usepackage{textcomp}
\usepackage[table]{xcolor}
\usepackage{makecell}
\usepackage{psfrag}
\usepackage{arydshln}
\usepackage{url}
\usepackage{soul}
\usepackage{tabularx}
\usepackage{comment}
\usepackage{graphicx,color}
\usepackage{epstopdf}
\usepackage[nolist]{acronym}
\usepackage{multirow}
\usepackage{longtable}
\usepackage{adjustbox}

\usepackage{etoolbox}
\usepackage{wrapfig}
\usepackage{epsfig}
\usepackage{color}
\usepackage{amsmath}
\usepackage{amssymb}
\usepackage{amsthm}
\newtheorem{theorem}{Theorem}[]
\newtheorem{lemma}[]{Lemma}
\newtheorem{corollary}{Corollary}[theorem]
\usepackage{mathabx}
\usepackage{enumerate}
\usepackage{multirow}
\usepackage{bbm}
\usepackage{algorithm}
\usepackage{algorithmic}
\usepackage{lipsum,amsmath,multicol}
\usepackage{stfloats}
\usepackage{arydshln} 
\usepackage{amsfonts}
\usepackage{dsfont}
\usepackage{upgreek}
\usepackage{tikz}
\usepackage{pgfplots}
\usepackage{pgfplotstable}
\pgfplotsset{compat=newest}
 \usetikzlibrary{plotmarks}
  \usetikzlibrary{arrows.meta}
  \usepgfplotslibrary{patchplots}
   \usepackage{subcaption}
\usepackage{verbatim}
\usepackage{booktabs} 
\usepackage{stfloats}
 
\usepackage[capitalise]{cleveref}
\Crefname{equation}{Eq.\!}{Eqs.\!}
\Crefname{figure}{Fig.\!}{Figs.\!}
\Crefname{tabular}{Tab.\!}{Tabs.\!}
\Crefname{section}{Section\!}{Sections.\!}

\definecolor{Linen}{rgb}{0.9803,0.9411,0.9019}
\definecolor{White}{rgb}{1,1,1}
\definecolor{Lightred}{rgb}{1,0.3803,0.3803}
\definecolor{Coral}{rgb}{1,0.4980,0.3137}
\definecolor{Grayblue}{rgb}{0.9411,0.9411,0.9803}
\definecolor{DarkLinen}{rgb}{0.729,0.7176,0.635}
\begin{document}
\begin{acronym}
\acro{NCCS}{Non-central chi-square}
\acro{RVs}{random variables}
\acro{dof}{degrees of freedom}
\acro{CDF}{cumulative distribution function}
\acro{EVT}{Extreme value theory}
\acro{i.i.d.}{independent and identically distributed}
\acro{i.n.i.d.}{independent and non-identically distributed}
\acro{SNR}{signal-to-noise ratio }
\acro{PDF}{probability density function }
\acro{MIMO}{ multiple-input multiple-output}
\acro{RIS}{reconfigurable intelligent surfaces}
\acro{OFDM}{Orthogonal Frequency-Division Multiplexing}
\acro{FSO}{free-space optical communication}
\acro{MISO}{Multiple-input and single-output}
\acro{SC}{selection combining}
\acro{URLLC}{Ultra-Reliable and Low-Latency Communication}
\acro{SWIPT}{simultaneous wireless information and power transfer}
\acro{SINR}{signal-to-interference-plus-noise ratio }
\acro{SIR}{signal-to-interference-ratio}
\acro{UAVs}{unmanned aerial vehicles}
\acro{TAS}{transmit antenna selection}
\end{acronym}
\bstctlcite{IEEEexample:BSTcontrol}
\title{Distribution of $k$-th Maximum Order Statistics of Independent, \& Non-Identical SNR random variables in $\kappa-\mu$ fading
and its applications in $6G$ }
	\author{ Srinivas Sagar, Athira Subhash, and  Sheetal Kalyani \\
      \thanks{\hspace{-0.7cm} \\
Srinivas Sagar and Sheetal Kalyani are with the Dept. of Electrical Engg., IIT Madras, India. Emails: \{ee21d051@smail, 
skalyani@ee\}.iitm.ac.in.\\Athira Subash is with the  Email:athira3003@gmail.com.}}

	\maketitle
	\begin{abstract}
		\textcolor{black} {This paper employs extreme value theory to establish the asymptotic distribution of the $k$-th maximum order statistics of signal to noise ratio (SNR)  for a $\kappa-\mu$ fading channel with independent and non-identically distributed (i.n.i.d.) parameters. 
        Since $\kappa-\mu$ encompasses well-known distributions such as Rice, Rayleigh, and Nakagami-m, the order statistics for these are also derived as special cases. We demonstrate the practical significance of our results by showcasing their applicability in several applications, including antenna selection in MIMO systems, backscatter systems, reconfigurable intelligent surfaces, and UAV-assisted relay selection systems. Furthermore, by utilizing the $k$-th maximum order statistics, we derive expressions for outage probability and average throughput of $k$-th maximum order statistics. Comprehensive Monte Carlo simulations are carried out to verify the accuracy of the proposed results.
 }
	\end{abstract}
 
	\begin{IEEEkeywords}
		$\kappa-\mu$ fading, non-central chi-square, extreme value theory, i.n.i.d., UAV, reconfigurable intelligent surfaces, backscatter.
	\end{IEEEkeywords}
	
	\section{Introduction}
    Researchers have developed various statistical distributions to model fading in wireless channel accurately. Among the different fading models,  the $\kappa-\mu$ fading distribution     \cite{yacoub2007kappa,cotton2008kappa,cotton2009channel,peppas2011sum,kumar2015coverage},
    a generalized fading model has gained significant attention. The $\kappa-\mu$ fading distribution encompasses a broad range of fading distributions observed in real-world contexts, such as Rayleigh and Rice distributions. The $\kappa-\mu$ fading distribution has found widespread applications in emerging wireless and $6G$ communication systems,, including body-to-body communications \cite{cotton2008kappa,cotton2009channel}, performance analysis \cite{peppas2011sum,kumar2015coverage,bhargav2017co}, cognitive radio \cite{sofotasios2012energy}, antenna diversity \cite{da2009accurate}, \ac{MIMO} \cite{zhang2013effective}, secrecy \cite{bhargav2016secrecy, moualeu2017secrecy}, and \ac{RIS} \cite{charishma2021outage} systems.  
    \par The authors of \cite{cotton2008kappa} utilized $\kappa-\mu$ fading to characterize channels in body-to-body communications, and \cite{cotton2009channel} extended these results to multi-antenna bodyworn systems. The distribution of the sum of \ac{i.i.d.} \cite{da2009accurate} and \ac{i.n.i.d.} \cite{peppas2011sum} $\kappa-\mu$ \ac{RVs} is presented, with applications in diversity receivers.
    Performance analyses have been conducted for $\kappa-\mu$ fading channels, covering aspects like coverage probability and outage probability in interference-limited scenarios, as demonstrated in \cite{kumar2015coverage, bhargav2017co}. Furthermore, \cite{charishma2021outage} derived outage probability expressions over $\kappa-\mu$ fading channels in \ac{RIS}-aided communication systems. Additionally, \cite{zhang2013effective} provided effective throughput expressions for multiple input single output systems under both \ac{i.i.d.} and \ac{i.n.i.d.} $\kappa-\mu$ fading channels. Authors of \cite{sofotasios2012energy} investigated the performance of energy detection in cognitive radio systems under $\kappa-\mu$ fading. Moreover, \cite{bhargav2016secrecy, moualeu2017secrecy} provided performance analyses regarding secrecy capacity and secrecy outage probability for $\kappa-\mu$ fading channels. Besides the mentioned applications, various other systems, such as \ac{FSO} \cite{gupta2018performance}, relay systems \cite{fikadu2015outage}, and \ac{OFDM} \cite{lafci2022performance}, have also utilized $\kappa-\mu$ fading in their performance analyses.   
    \par When examining the performance of wireless communication systems, such as transmit antenna selection, secrecy outage probability with multiple eavesdroppers, and selection combining (SC), there is often a need to determine the $k$-th maximum order statistics of \ac{SNR} or \ac{SIR} \ac{RVs} for the considered channel fading. However, deriving the exact order statistics for complicated channel fading, like $\kappa-\mu$, can be extremely challenging and may not be feasible. Additionally, deriving the order statistics for \ac{i.n.i.d.} $\kappa-\mu$ RVs is even more difficult. Therefore, the primary aim of this paper is to derive the $k$-th maximum order statistics of SNR RVs with $\kappa-\mu$ fading using \ac{EVT}. 
   The EVT based expressions are significantly simpler compared to the exact distributions of maxima or minima, particularly when the underlying individual distributions are themselves complex.
    Understanding the $k$-th maximum order statistics is crucial in various communication system applications, including \ac{TAS} \cite{park2009outage,duan2019asymptotic}, relay systems \cite{xia2013spectrum,subhash2021cooperative}, \ac{MIMO} systems \cite{pun2010performance}, cognitive radio networks \cite{haider2015spectral,al2019asymptotic,subhash2020transmit}, \ac{URLLC} \cite{mehrnia2022extreme}. \ac{EVT} has been extensively utilized to derive these order statistics.
   \begin{table*}[t]
        \centering
        \caption{Comparison across existing literature}
        \label{tab:refer}
        \resizebox{\textwidth}{!}{%
        \begin{tabular}{|c|p{3cm}|c|c|c|c|c|p{3cm}|}
            \hline
            \textbf{Ref} & \textbf{Random Variable} & \textbf{SNR/SIR} & \textbf{\ac{i.i.d.}} & \textbf{\ac{i.n.i.d.}} & \textbf{Max Order} & \textbf{$k$-th Max Order} & \textbf{Applications} \\
            \hline
            \cite{kalyani2006extreme} & Rayleigh & - & Yes & No & Yes & No & OFDM channel tracking \\
            \hline
            \cite{kalyani2012asymptotic} & Gamma & - & Yes & No & Yes & No & Proportional Fair scheduler (PFS) in wireless systems \\
            \hline
            \cite{al2018asymptotic} & Exponential, Weibull, Gamma, $\alpha - \mu$ and Gamma-Gamma & SNR & Yes & No & Yes & Yes & Antenna diversity \\
            \hline
            \cite{al2019asymptotic} & Gamma & SIR & Yes & No & Yes & Yes & Cognitive Radio Networks \\
            \hline
            \cite{subhash2019asymptotic} & $\kappa-\mu$ shadowed & SIR & Yes & No & Yes & No & Antenna selection in MIMO \\
            \hline
            \cite{al2020performance} & Exponential & SNR & Yes & No & Yes & Yes & Multi-Tag Backscatter Systems \\
            \hline
            \cite{subhash2020transmit} & $\kappa-\mu$ shadowed & SIR & Yes & No & No & No & Cognitive Radio Networks \\
            \hline
            \cite{subhash2021cooperative} & Rayleigh & SNR & Yes & Yes & Yes & No & Cooperative Relaying in SWIPT network \\
            \hline
            \cite{subhash2022asymptotic} & Rician & SNR & Yes & Yes & Yes & Yes & UAV assisted IoT \\
            \hline
            \cite{sagar2026multi} & \ac{NCCS} with one degree of freedom & SNR & Yes & Yes & Yes & Yes & Multi-RIS communication systems \\
            \hline
            This paper & $\kappa-\mu$ (includes Rayleigh, Rician, Nakagami, and \ac{NCCS}) & SNR & \textbf{Yes} & \textbf{Yes} & \textbf{Yes} & \textbf{Yes} & Secrecy in multicasting, antenna selection in MIMO systems, and UAV-assisted relay selection system, RIS, Backscatter systems \\
            \hline
        \end{tabular}%
       }
    \end{table*}        
    \par \ac{EVT} has been applied in various contexts to derive limiting distributions and analyze performance metrics in wireless communication systems. In \cite{park2009outage}, \ac{EVT} was utilized to derive the limiting distributions of the multicasting channel. Similarly, in systems employing \ac{TAS}, the limiting distribution of \ac{SNR} at the secondary receiver was derived using \ac{EVT} \cite{duan2019asymptotic}. Cognitive radio networks, which opportunistically utilize available spectrum, have also benefited from \ac{EVT}. In \cite{xia2013spectrum, subhash2020transmit}, \ac{EVT} was employed to derive the limiting distributions of \ac{SNR}/\ac{SIR} at the secondary network. Furthermore, authors in \cite{al2019asymptotic} derived the distribution of the $k$-th maximum \ac{SIR}, while \cite{haider2015spectral} derived average spectral and energy efficiency in cognitive radio networks. In the realm of \ac{MIMO} systems, \ac{EVT} was used in \cite{pun2010performance} to derive the limiting distributions of effective \ac{SIR} and effective \ac{SINR}, and to present the sum-rate and scaling laws of the system. In all these works, the predominant focus has been on \ac{i.i.d.} \ac{RVs}. The characterization of the asymptotic distribution of order statistics for \ac{i.n.i.d.} \ac{RVs} is somewhat challenging compared to that of \ac{i.i.d.} \ac{RVs}. Authors of \cite{subhash2021cooperative} initiated the characterization of \ac{i.n.i.d.} \ac{RVs} in communication literature, with cooperative relaying in \ac{SWIPT} network and further extended for Rician fading \ac{RVs} in \cite{subhash2022asymptotic}. The authors in \cite{sagar2026multi} studied the performance of multi-\ac{RIS} communication systems, focusing on selecting the \ac{RIS} link with the highest SNR at the destination using EVT. Table \ref{tab:refer} provides a comparative analysis of the existing literature.
   
    To the best of our knowledge, none of the existing works have investigated the $k$-th maximum order statistics of the $\kappa-\mu$ fading distribution. Motivated by this, we employ \ac{EVT} to characterize the asymptotic distribution of $k$-th maximum \ac{i.n.i.d.} $\kappa-\mu$ \ac{RVs}. We highlight several applications where the obtained results can be utilized.
    \par Secure communication aims to facilitate the accurate retrieval of source information by the intended recipient while minimizing the exposure of this information to eavesdroppers, ensuring confidentiality. Secrecy capacity refers to the maximum amount of secret information that can be reliably transmitted over a communication channel in the presence of an eavesdropper. One must analyze the maximum order statistics of the fading channel under consideration to determine performance metrics such as secrecy outage and capacity. For instance, authors in \cite{badrudduza2019enhancing} examined a wireless multicast network operating over a $\kappa-\mu$ fading channel with multiple eavesdroppers. Calculating the capacity of the eavesdroppers' channel necessitates analyzing the maximum order statistics of i.i.d $\kappa-\mu$ \ac{RVs}. 
    Therefore, the proposed \ac{i.n.i.d.} framework can also be used to characterize the performance of such \ac{i.i.d.} systems as a special case.
    \par Another potential area is in analyzing the performance of MIMO systems employing antenna selection. In various wireless contexts, employing optimal antenna selection in MIMO systems delivers considerable advantages in efficiency and reliability. Additionally, one can incorporate Maximum Ratio Combining (MRC) at the receiver alongside antenna selection at the transmitter \cite{kumbhani2016performance}. In each of these scenarios, determining the optimal link involves analyzing the $k$-th maximum order statistics. Thus, our results can be utilized to assess the performance of these systems.\\
    In \cite{cotton2009channel}, the authors found that the $\kappa-\mu$ distribution accurately models small-scale fading in body-to-body channels. They used SC for multiple-antenna body-worn systems. Our findings apply to such scenarios for performance characterization.
    \par As $\kappa-\mu$ fading encompasses special cases like Rice, Rayleigh, and Nakagami-m, we also discuss applications involving these specific instances. Authors of  \cite{ji2020secrecy} investigated the two-hop cognitive secrecy transmission scheme utilizing decoding and forwarding (DF) \ac{UAVs} as relays with energy harvesting in Nakagami-m channels. Notably, the terminal node is equipped with multiple antennas, and best antenna selection is considered for signal reception. Statistics for wireless systems employing DF opportunistic relays (ORs) under outdated channel state information (CSI) and i.i.d. generalized-Rician fading environments are derived in \cite{le2019opportunistic} using a \ac{NCCS} RV distribution and SC. Therefore, our derived results can effectively characterize system performance. \\
    \\
    In summary, the main contributions of this work are as follows:
    \begin{enumerate}
      \item We utilize EVT to obtain the asymptotic distribution of the $k$-th maximum order statistics from a sequence of \ac{i.n.i.d.} \ac{SNR} \ac{RVs} with $\kappa-\mu$ fading, with normalizing constants $a_N$ and $b_N$.  
      \item  Further, we derive the asymptotic distribution of the $k$-th maximum order statistics from a sequence of \ac{i.i.d.} \ac{SNR}  \ac{RVs} with $\kappa-\mu$ fading as a specific case, assuming all parameters to be identical.
      \item  Given that $\kappa-\mu$ fading encompasses instances such as Rice, Rayleigh, and Nakagami-m, we derive the $k$-th maximum order statistics for each of these special cases. Assuming $\kappa$ to be i.n.i.d. and $\mu$ to be i.i.d. covers distributions like Rice and Rayleigh while assuming $\kappa$ to be i.i.d. and $\mu$ to be i.n.i.d. covers the Nakagami-m distribution.
      \item Furthermore, we utilize the derived results to formulate expressions for outage probability and average throughput, crucial for analyzing system performance. We validated these results through extensive simulations.
      \item Lastly, we demonstrated the practical utility of our derived results across various applications, including backscatter systems,  RISs, MIMO systems, and UAVs with relays.
    \end{enumerate}
    \par The organization of the paper is as follows. Section II gives an overview of the $\kappa-\mu$ fading model. Section III presents the results of the asymptotic distribution of $k$-th maximum order statistics for \ac{i.n.i.d.} \ac{SNR} \ac{RVs} in $\kappa-\mu$ fading channels, along with the derivation of expressions for average throughput and outage capacity. Section IV conducts an analysis of special cases of the $\kappa-\mu$ distribution. Section V explores various applications of the derived results in  communications domain. Section VI provides  simulation results supporting our theoretical analysis, Section VII concludes the paper.
 
    In this paper, we employ the following notation: $f_{X}\left ( . \right )$ and $F_{X}\left ( . \right )$ denote the probability density function and cumulative distribution functions of a RV $X$, respectively. The expectation of $X$ is denoted as $\mathbb{E}\left ( X \right )$, and $\mathbb{P}\left ( A \right )$ denotes the probability of an event $A$.
   \section{$\kappa-\mu$ distribution}
    The received signal envelope, labeled as R, can be accurately portrayed with respect to the in-phase and quadrature components of the fading signal \cite[eq. (6)]{yacoub2007kappa}
    \begin{equation}
        R^{2}= \sum_{i=1}^{\mu}\left ( X_{i}+p_i \right )^{2}+\sum_{i=1}^{\mu}\left ( Y_{i}+q_i \right )^{2}.
    \end{equation}
   Let $R_i^2=\left ( X_{i}+p_i \right )^{2}+\left ( Y_{i}+q_i \right )^{2}$, so that $R^2=\sum_{i=1}^{\mu}R_i^2$. In this context,  $\mu$ signifies the number of multipath clusters, $X_i$ and  $Y_i$ are distinct Gaussian random processes with means  $\mathbb{E}\left ( X_{i} \right )=\mathbb{E}\left ( Y_{i} \right )=0$ 
    and variances $\mathbb{E}\left ( X_{i}^{2} \right )=\mathbb{E}\left ( Y_{i}^{2} \right )=\sigma^{2}$, $p_i$ and $q_i$ denote the mean values of the in-phase and quadrature-phase components of the $i^{th}$ multipath cluster. Now, suppose $\gamma=R^{2}$ is the instantaneous \ac{SNR} of a $\kappa-\mu$ fading signal, the \ac{PDF} of $\gamma$ can be written as \cite{yacoub2007kappa}
    \begin{align}
        f_{\gamma}\left ( \gamma  \right )=&\frac{\mu \left ( 1+\kappa \right )^{\frac{\mu +1}{2}}\gamma ^{\frac{\mu -1}{2}}}{\kappa^{\frac{\mu -1}{2}}\bar{\gamma}^{\frac{\mu +1}{2}}\exp \left ( \mu \kappa \right )}\exp \left ( -\frac{\mu\left ( 1+\kappa \right )\gamma}{\bar{\gamma}} \right ) \nonumber \\
        &\mathit{I_{\mu-1}}\left ( 2\mu\sqrt{\frac{\kappa(1+\kappa)\gamma}{\bar{\gamma}}} \right ).
    \end{align}
    Here, $\mathit{I_{\nu}}$ is the modified Bessel function of the first kind with order $\nu$. Moreover, $\bar{\gamma}=\mathbb{E}\left ( \gamma \right )$, while $\kappa$ and $\mu$ are fading parameters, where $\kappa=\frac{\sum_{i=1}^{\mu}p_i^2+q_i^2}{2\mu\sigma^2}$. The \ac{CDF} of $\gamma$ is given by
    \begin{equation}\label{eq1}
        F_{\gamma}(\gamma)=1-Q_{\mu}\left(\sqrt{2\kappa\mu}, \sqrt{\frac{2(1+\kappa)\mu\gamma}{\bar{\gamma}}}\right),
    \end{equation}
    where $Q_{\mu}(. , .)$ is the Marcum-Q function \cite{molisch2012wireless}. Next, we derive the $k$-th maximum order statistics distribution of a sequence of \ac{i.n.i.d.} SNR ($\gamma$) \ac{RVs}.
    \section{$k$-th maximum order statistics of i.n.i.d. $\kappa-\mu$  RVs  } \label{sec_kth_max}
     Consider a sequence $\{\gamma_{1},\gamma_{2},..,\gamma_{N} \}$, where each element represents an SNR RV, denoted as $\gamma_{n}$. These RVs follow a \ac{CDF} $F_{\gamma_{n}}(\gamma)$ as defined in equation (\ref{eq1}), where $n$ ranges from $1$ to $N$. Let $\gamma_{ (1: N)} \leq \gamma_{(2: N)} \leq \cdots \leq \gamma_{(N:N)}$, be the order statistics where the $k$-th order statistic is given by $\gamma_{(N-k+1:N)}$. Here, $k=1$ represents the maximum order statistics i.e. $\gamma_{(N:N)}=\gamma_{max}^{N}=\max \left \{\gamma_{1},\gamma_{2},..,\gamma_{N} \right \}$. Finding the exact \ac{CDF} of $k$-th order statistic $\gamma_{(N-k+1:N)}$  involves very complicated expression as given  \cite[(5.2.1)]{david2003order} 
    \begin{align}
        & F_{\gamma_{(N-k+1:N)}}(\gamma)=\sum_{i=k}^{N} \sum_{S_{i}} \prod_{n=1}^{i} F_{\gamma_{j_n}}(\gamma) \nonumber\\ & \times  \prod_{n=i+1}^{N}\left[1-F_{\gamma_{j_n}}(\gamma)\right], \ k=1,2,\cdots,N,
        \label{cdf_kth_max_exact}
    \end{align}
    where the summation $S_{i}$ is over all the permutations $\left(j_{1}, \ldots, j_{N}\right)$ of $1, \ldots, N$ for which $j_{1}<\cdots<j_{i}$ and $j_{i+1}<\cdots<j_{N}$. A much simpler asymptotic expression 
    of $\gamma_{(N-k+1:N)}$ can be computed with the help of \ac{EVT}.  
    Asymptotic order statistics can be derived with the help of the results presented in \cite{barakat2002limit}. For ease of understanding, we will repeat the important Lemma here. 
 
    \begin{lemma} \label{thm_order_stat} \color{black}
    Assume that for suitable normalizing constants $a_{N}>0,$ $b_{N}$
    \begin{equation}
        \delta_{N}=\max _{1 \leq n \leq N} 1-F_{\gamma_n} \left(a_N\gamma+b_N\right) \rightarrow 0 \text { as } N \rightarrow \infty.
        \label{ua1}
    \end{equation}
    Then $\tilde{\phi}_{N-k+1:N}(\gamma) = \mathbb{P}\left(\frac{\gamma_{(N-k+1:N)}-b_N}{a_N} \leq \gamma \right)$ converges weakly to a non degenerate distribution function $\tilde{\phi}_{N-k+1}\left(\gamma \right)$ if and only if, for all $\gamma$ for which $\tilde{\phi}_{N-k+1}\left(\gamma \right)>0$, the limit
    \begin{equation}
        \tilde{u}(\gamma) = \lim _{N \rightarrow \infty} \sum_{n=1}^{N} 1-F_{\gamma_n}(a_N\gamma+b_N) \ \text{is finite,}
        \label{ua2}
    \end{equation}
    and the function
    \begin{equation}
        \tilde{\phi}_{k}\left(\gamma \right) = \sum\limits_{n=0}^{k-1} \frac{\tilde{u}^n(\gamma)}{n!} \exp (-\tilde{u}(\gamma)), 
        \label{ua-3}
    \end{equation}
    is a non degenerate distribution. 
    \end{lemma}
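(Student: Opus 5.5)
The statement is the classical Poisson-limit characterization for order statistics of independent, non-identically distributed variables under the uniform asymptotic negligibility condition \eqref{ua1}. My plan is to reduce the event $\{\gamma_{(N-k+1:N)}\le a_N\gamma+b_N\}$ to a counting event. Fix $\gamma$, set $x_N=a_N\gamma+b_N$, and let $p_{n,N}=1-F_{\gamma_n}(x_N)$ and $S_N=\sum_{n=1}^N \mathbb{1}\{\gamma_n>x_N\}$. The $k$-th largest value is at most $x_N$ exactly when at most $k-1$ of the $\gamma_n$ exceed $x_N$. Hence
\begin{equation*}
\tilde{\phi}_{N-k+1:N}(\gamma)=\mathbb{P}(S_N\le k-1),
\end{equation*}
where $S_N$ is a sum of independent Bernoulli$(p_{n,N})$ variables with $\max_n p_{n,N}=\delta_N\to 0$ by \eqref{ua1}. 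This is the same quantity as \eqref{cdf_kth_max_exact}, regrouped by the number of exceedances.

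\emph{Sufficiency.} Suppose $\lambda_N:=\sum_n p_{n,N}\to\tilde u(\gamma)<\infty$. I would use Le Cam's inequality,
\begin{equation*}
d_{TV}\bigl(\mathcal{L}(S_N),\mathrm{Poisson}(\lambda_N)\bigr)\le \sum_n p_{n,N}^2\le \delta_N\lambda_N .
\end{equation*}
The right-hand side tends to $0$. An alternative route is to compare probability generating functions using $\log(1-p)=-p+O(p^2)$ uniformly, since $p_{n,N}\le\delta_N$. Either way $\mathbb{P}(S_N\le k-1)\to\sum_{n=0}^{k-1}\tilde u^n e^{-\tilde u}/n!$, which is \eqref{ua-3}. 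Once \eqref{ua-3} is assumed to be a non-degenerate distribution function, this pointwise convergence is weak convergence.

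\emph{Necessity.} Suppose $\tilde{\phi}_{N-k+1:N}(\gamma)\to\tilde\phi_{N-k+1}(\gamma)$ at a point where the limit is positive. The key step is to show that $\lambda_N$ stays bounded. If $\lambda_N\to\infty$ along a subsequence, then $\mathrm{Var}(S_N)\le\lambda_N$ together with Chebyshev's inequality gives $\mathbb{P}(S_N\le k-1)\to 0$, which contradicts positivity of the limit. So $\lambda_N$ is bounded, and I would pass to subsequences along which $\lambda_N\to\lambda$.

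The Le Cam bound then shows that the limit equals $g_k(\lambda):=\sum_{n<k}\lambda^n e^{-\lambda}/n!$. The function $g_k$ is strictly decreasing in $\lambda$, since its derivative is $-\lambda^{k-1}e^{-\lambda}/(k-1)!$. Therefore every subsequential limit $\lambda$ is the same, and the full limit $\tilde u(\gamma)$ exists and is finite. The limit distribution is then forced to have the form \eqref{ua-3}.

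The main obstacle is handling continuity points. Weak convergence only guarantees convergence at continuity points of the limit, so the argument above applies there directly. For the remaining $\gamma$ I would use monotonicity of $\lambda_N$ in $\gamma$ and the finiteness of $\lambda_N$ at nearby continuity points, where the limit is still positive, to extend the statement. Alternatively, I would simply cite \cite{barakat2002limit} for this technical extension, as the paper itself does.
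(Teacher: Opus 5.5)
Your proof is correct, and it takes a genuinely different route: the paper gives no argument and defers entirely to \cite{barakat2002limit}. You make the lemma self-contained by writing $\tilde{\phi}_{N-k+1:N}(\gamma)=\mathbb{P}(S_N\le k-1)$ for a sum of independent Bernoulli exceedance indicators. The lemma then becomes the Poisson limit theorem for a Bernoulli triangular array. Le Cam's bound $\sum_n p_{n,N}^2\le\delta_N\lambda_N$ gives sufficiency, and Chebyshev together with strict monotonicity of $g_k(\lambda)=\Gamma(k,\lambda)/\Gamma(k)$ gives necessity.

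Your route buys two things. First, transparency: condition \eqref{ua1} is used only to make $\sum_n p_{n,N}^2$ negligible. Second, a finite-$N$ error bound $|\tilde{\phi}_{N-k+1:N}(\gamma)-g_k(\lambda_N(\gamma))|\le\delta_N\lambda_N$. That bound is exactly what justifies plugging a finite-$N$ approximation of $\lambda_N$ into \eqref{ua-3}, as the paper does with \eqref{comp_1}.

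The citation buys the bookkeeping at exceptional points, and there your sketch is thinner than you suggest. Monotonicity only sandwiches $\lambda_N(\gamma)$ between its values at neighbouring continuity points. That yields at best $\limsup_N\lambda_N(\gamma)<\infty$, and not even that at a jump up from $0$; it never gives existence of the limit. Weak convergence also gives no information at a jump. So the clause ``for all $\gamma$ with $\tilde{\phi}_{N-k+1}(\gamma)>0$'' should be read at continuity points, where your pointwise argument is complete. Likewise, in the sufficiency direction you should reuse the Chebyshev bound at points with $\tilde u(\gamma)=\infty$. None of this matters for the paper, where $\tilde u(\gamma)=e^{-\gamma}$ is finite everywhere and the limit $\Gamma(k,e^{-\gamma})/\Gamma(k)$ is continuous.
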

    	
    \begin{proof}
    Please refer \cite{barakat2002limit} for the detailed proof.
    \end{proof}     
   The distribution of $k$-th maximum order statistics of  \ac{i.n.i.d.} \ac{SNR}  \ac{RVs} with
    $\kappa-\mu$ fading is derived by identifying the normalizing constants $a_{N}$ and $b_{N}$ that ensure the limit in (\ref{ua2}) is finite. We now derive the normalized $k$-th maximum order statistics $\tilde{\gamma }_{\left ( N-k+1:N \right )}=\frac{\gamma _{\left ( N-k+1:N \right )}-b_{N}}{a_{N}}$ for normalizing constants $a_{N}$ and $b_{N}$ in the following theorem. 
   \begin{theorem} \label{thm_kth_max}   
        The asymptotic distribution of normalized $k$-th maximum $\tilde{\gamma }_{\left ( N-k+1:N \right )}$ of a sequence of \ac{i.n.i.d.} \ac{SNR}  \ac{RVs} in  $\kappa-\mu$ fading,  as $N \to \infty$ is given by  		
		\begin{equation}
    		F_{\tilde{\gamma}_{(N-k+1:N)}}(\gamma) = \frac{\Gamma(k,  \tilde{u}(\gamma))}{\Gamma(k)}=\frac{\Gamma(k, \exp(- \gamma))}{\Gamma(k)},
    		\label{k_th_max_cdf}
		\end{equation}
		for the choice of normalizing constants
        $$
        \begin{aligned}
            a_{N} &=\frac{\bar{\gamma}}{(1+\tilde{\kappa})\tilde{\mu}} \nonumber \\
            b_{N} &=a_N c_N \nonumber \\=&a_N \left(\log (\tilde{N})-c_0  \log (\log (\tilde{N}))+c_1 \sqrt{\log (\tilde{N})}-c_2\right),
        \end{aligned}
        $$
        Here, $c_{0}=\frac{-(2\tilde{\mu}-3)}{4}$, $c_1=2 \sqrt{\frac{\tilde{\kappa}\tilde{\mu}}{\tilde{\theta}}}$, and  $c_2=\frac{1}{\tilde{\theta}}\left[ \log\left( \frac{\sqrt{\tilde{\kappa}}^{\tilde{\mu}-\frac{1}{2}}}{\sqrt{1+\tilde{\kappa}}^{\tilde{\mu}-\frac{3}{2}}}\frac{2\sqrt{\pi\tilde{\mu}}}{\left(\frac{a_N}{\bar{\gamma}}\right)^{\frac{2\tilde{\mu}-3}{4}}} \right)-\tilde{\kappa}\tilde{\mu} \right]$. Each $\gamma_n$ follows $\kappa-\mu$ distribution with parameters $\left(\kappa_n, \mu_n\right)$.
        Assume $\left(\kappa_n, \mu_n\right)$ takes a finite set of values i.e. $\kappa_n \in\left\{\kappa_1, \kappa_2 \ldots \kappa_P\right\}$ for all $n \in\{1, \ldots N\}$ and $\mu_n \in\left\{\mu_1, \mu_2 \ldots \mu_L\right\}$  for all  $n \in\{1, \cdots N\}$ with $P, L$ as finite numbers. Choose $\tilde{\mu}$ to be the smallest among $\left\{\mu_{1},\cdots,\mu_{L}\right\}$ and $\tilde{\kappa}$ to be the smallest among $\left\{\kappa_{1},\cdots,\kappa_{P}\right\}$.  
		Also, 
        $N_{ij}=\sum_{n=1}^N \mathbb{I}_{(\kappa_n=\kappa_i, \mu_n=\mu_j)} \quad  1 \leq i \leq P , 1 \leq j \leq L$ and $\tilde{N} = \sum\limits_{n=1}^N \mathbb{I}_{(\kappa_n=\tilde{\kappa},\mu_n=\tilde{\mu})}$. Here $N_{ij}$ represents the number of times  pair $\left(\kappa_i, \mu_j\right)$ occurs among  $N$ values, and we assume $\tilde{N}\to \infty $, as $N \to \infty $. Consider $\theta_{ij}=\frac{\left ( 1+\kappa_i \right )\mu_j}{\left ( 1+\tilde{\kappa} \right )\tilde{\mu}}$, and $\tilde{\theta}=\min \left \{ \theta_{ij} \right \}_{(i,j)=(1,1)}^{(P, L)}$.\\
	\end{theorem}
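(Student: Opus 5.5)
The proof applies Lemma~\ref{thm_order_stat}. It reduces to showing that, with the stated $a_N,b_N$,
\[
\tilde u(\gamma)=\lim_{N\to\infty}\sum_{n=1}^N Q_{\mu_n}\Bigl(\sqrt{2\kappa_n\mu_n},\sqrt{2\theta_n(\gamma+c_N)}\Bigr)=e^{-\gamma},
\]
where $\theta_n=\frac{(1+\kappa_n)\mu_n}{(1+\tilde\kappa)\tilde\mu}$. This uses $\frac{2(1+\kappa_n)\mu_n(a_N\gamma+b_N)}{\bar\gamma}=2\theta_n(\gamma+c_N)$. Once this is established, $\delta_N\to0$ follows immediately, since each term tends to zero as $c_N\to\infty$. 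Moreover $\sum_{n=0}^{k-1}\frac{e^{-n\gamma}}{n!}e^{-e^{-\gamma}}=\Gamma(k,e^{-\gamma})/\Gamma(k)$ is the standard identity for the upper incomplete gamma function at integer $k$, and it is a nondegenerate (Gumbel-type) distribution. So all the work lies in the limit computation.

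\textbf{Step 1 (tail asymptotics).} For fixed $a=\sqrt{2\kappa\mu}$ and $b\to\infty$, I would use the large-argument expansion $I_{\nu}(z)\sim e^{z}/\sqrt{2\pi z}$ inside the PDF, or equivalently the known asymptotic form of the Marcum-$Q$ function:
\[
Q_\mu(a,b)\sim \frac{1}{\sqrt{2\pi}}\,\frac{b^{\mu-1/2}}{a^{\mu-1/2}}\cdot\frac{1}{b-a}\,e^{-(b-a)^2/2}.
\]
Writing $b^2/2=\theta(\gamma+c_N)=:x$, the exponent becomes $-x+\sqrt{2}\,a\sqrt{x}-a^2/2$. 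This gives
\[
1-F_n\approx C_n\, x^{\frac{2\mu_n-3}{4}}\exp\bigl(-x+2\sqrt{\kappa_n\mu_n}\sqrt{x}-\kappa_n\mu_n\bigr),
\]
with an explicit constant $C_n$ involving $\kappa_n$ and $\sqrt{\pi\mu_n}$. This matches the structure of $c_0,c_1,c_2$ in the statement.

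\textbf{Step 2 (dominant group).} Split the sum into the $P\times L$ groups with multiplicities $N_{ij}$. For the group with $\theta_{ij}=\tilde\theta$, substitute $x=\tilde\theta(\gamma+c_N)$ and expand. The term $\sqrt{x}$ becomes $\sqrt{\tilde\theta c_N}\,(1+O(\gamma/c_N))$, and $\sqrt{c_N}=\sqrt{\log\tilde N}+o(1)$. The power $x^{(2\mu-3)/4}$ is $(\log\tilde N)^{(2\mu-3)/4}(1+o(1))$. The constants are chosen so that:
\begin{itemize}
\item the $c_0\log\log\tilde N$ term cancels the polynomial factor;
\item the $c_1\sqrt{\log\tilde N}$ term cancels the $e^{2\sqrt{\kappa\mu x}}$ growth;
\item $c_2$ absorbs $C$ and $e^{-\kappa\mu}$.
\end{itemize}
Then each term equals $\tilde N^{-1}e^{-\gamma}(1+o(1))$, and the $\tilde N$ such terms sum to $e^{-\gamma}$. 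I would check carefully that the cross terms, such as $\sqrt{c_N}$ versus $\sqrt{\log\tilde N}$ differences multiplied by $c_1$, vanish. They do, because $(\log\log\tilde N)/\sqrt{\log\tilde N}\to0$.

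\textbf{Step 3 (negligible groups), the main obstacle.} Every other group must contribute $o(1)$ even though $N_{ij}$ may be comparable to $N$. The difficulty is reconciling the statement's normalization, which is built from $\tilde\kappa,\tilde\mu$ and $\tilde N$, with $\tilde\theta$, the minimum of the $\theta_{ij}$. Since $\tilde\kappa,\tilde\mu$ are the smallest values, $\theta_{ij}\ge1$, with equality only for the pair $(\tilde\kappa,\tilde\mu)$. So $\tilde\theta=1$ is attained by the $\tilde N$ group, and every other group has $\theta_{ij}>1$. Its terms then decay like $e^{-\theta_{ij}\log\tilde N}\cdot e^{O(\sqrt{\log\tilde N})}=\tilde N^{-\theta_{ij}+o(1)}$, and the group contributes $N_{ij}\tilde N^{-\theta_{ij}+o(1)}$.

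For this to vanish I need $N_{ij}=o(\tilde N^{\theta_{ij}-\epsilon})$. This growth condition is implicit in the assumption that $\tilde N\to\infty$ is the governing count, for instance $\tilde N$ proportional to $N$. I would state it explicitly, then bound the Marcum-$Q$ terms uniformly using the upper bound from Step 1 to conclude $\tilde u(\gamma)=e^{-\gamma}$.
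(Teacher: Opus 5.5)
Your route is the paper's: the large-argument form $Q_{\mu}(a,b)\approx (b/a)^{\mu-1/2}Q(b-a)$, grouping by the pairs $(\kappa_i,\mu_j)$, the observation that $\theta_{ij}\ge 1$ with equality only at $(\tilde\kappa,\tilde\mu)$ (so $\tilde\theta=1$), and a dominant/negligible split. However, Step 2 contains a concrete error, and with it the verification fails for the stated $c_2$.

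You assert that $\sqrt{c_N}=\sqrt{\log\tilde N}+o(1)$ and that the cross terms vanish because $(\log\log\tilde N)/\sqrt{\log\tilde N}\to 0$. That argument removes the $c_0$ and $c_2$ pieces of $c_N$, but not the $c_1\sqrt{\log\tilde N}$ piece. Since $c_N-\log\tilde N=c_1\sqrt{\log\tilde N}+O(\log\log\tilde N)$,
\[
\sqrt{c_N}-\sqrt{\log\tilde N}=\frac{c_N-\log\tilde N}{\sqrt{c_N}+\sqrt{\log\tilde N}}\longrightarrow\frac{c_1}{2}.
\]
With $c_1=2\sqrt{\tilde\kappa\tilde\mu}$ this gives $2\sqrt{\tilde\kappa\tilde\mu}\,\sqrt{\gamma+c_N}-c_1\sqrt{\log\tilde N}\to c_1^2/2=2\tilde\kappa\tilde\mu$, so a factor $e^{2\tilde\kappa\tilde\mu}$ survives. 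Combined with the $e^{-\tilde\kappa\tilde\mu}$ from the tail, the $\tilde N$-group tends to $C\,e^{c_2}e^{\tilde\kappa\tilde\mu}e^{-\gamma}$, where in your variable $x$ we have $C=\frac{1}{2\sqrt{\pi}}(\tilde\kappa\tilde\mu)^{-(2\tilde\mu-1)/4}$. With $\tilde\theta=1$ and $a_N/\bar\gamma=1/((1+\tilde\kappa)\tilde\mu)$, the stated $c_2$ reduces to $-\log C-\tilde\kappa\tilde\mu$, which is exactly what cancels this. Under your bookkeeping, where $c_2$ only absorbs $C$ and $e^{-\kappa\mu}$, the same $c_2$ would give $e^{-2\tilde\kappa\tilde\mu}e^{-\gamma}$, contradicting the theorem. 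The paper handles this through its binomial step $x_1(1+2x_2/x_1)^{1/2}\approx x_1+x_2$ with $x_2=c_1/2$. That step produces the constant $2\theta_{ij}\sqrt{\kappa_i\mu_j/\theta_{ij}}\sqrt{\tilde\kappa\tilde\mu/\tilde\theta}$ in its Term-4, which $c_2$ is designed to cancel.

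Your Step 3, by contrast, is correct and more careful than the paper. The paper simply asserts $N_{ij}/\tilde N^{\theta_{ij}}\to 0$ for $\theta_{ij}>1$, which does not follow from $\tilde N\to\infty$ alone. It also overlooks two factors carried by the non-dominant groups:
\begin{itemize}
\item the factor $\exp\bigl[2\theta_{ij}\bigl(\sqrt{\kappa_i\mu_j/\theta_{ij}}-\sqrt{\tilde\kappa\tilde\mu}\bigr)\sqrt{\log\tilde N}\bigr]$, which grows whenever $\kappa_i>\tilde\kappa$;
\item a polylogarithmic factor that need not vanish. The paper's claim that its Term-2 tends to zero fails when $\tilde\mu<3/2$, since the exponent $(2\tilde\mu-3)/4$ is then negative while $S\to 0$.
\end{itemize}
Writing each such term as $\tilde N^{-\theta_{ij}+o(1)}$ and imposing $N_{ij}=o(\tilde N^{\theta_{ij}-\epsilon})$ (for example via $\tilde N\ge cN$) is exactly the hypothesis that makes these groups negligible, and it should be stated. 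Once you correct the $c_1/2$ shift in Step 2, your argument proves the theorem under that hypothesis, which the paper also uses implicitly.
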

	\begin{proof}
    Please refer Appendix \ref{proof_main} for the proof.
	\end{proof}
   Note that for all practical purposes, one can use 
   \begin{equation*}
         \tilde{u}(\gamma)= \sum_{(i,j) \rightarrow (1,1)}^{(P \times L) }\exp\left (-\theta_{ij}\gamma\right ) P_{ij},
    \end{equation*}
    where $P_{ij}$ is given by,
    \begin{multline*}
        P_{ij}=\frac{N_{ij}}{(\tilde{N})^{\theta_{ij}}}\exp \left[2\theta_{ij} \left(\sqrt{\frac{\kappa_i\mu_j}{\theta_{ij}}} -\sqrt{\frac{\tilde{\kappa}\tilde{\mu}}{\tilde{\theta}}} \right)\sqrt{\log (\tilde{N})}\right]\\
        \exp \left[2\theta_{ij}\sqrt{\frac{\kappa_i\mu_j}{\theta_{ij}}}\sqrt{\frac{\tilde{\kappa}\tilde{\mu}}{\tilde{\theta}}}+\log\left(\frac{\sqrt{1+\kappa_i}^{\mu_j-\frac{3}{2}}}{\sqrt{\kappa_i}^{\mu_j-\frac{1}{2}}}\frac{1}{2\sqrt{\pi \mu_j}}\right)\right] \\ \exp \left[ \left(-\kappa_{i}\mu_j\right)+ \left ( \log \left ( \frac{a_N}{\bar{\gamma}} S \right )^{\frac{2\tilde{\mu}-3}{4}} \right ) \right] \\
        \exp \left( \theta_{ij} \frac{1}{\tilde{\theta}}\left[ \log\left( \frac{\sqrt{\tilde{\kappa}}^{\tilde{\mu}-\frac{1}{2}}}{\sqrt{1+\tilde{\kappa}}^{\tilde{\mu}-\frac{3}{2}}}\frac{2\sqrt{\pi\tilde{\mu}}}{\left(\frac{a_N}{\bar{\gamma}}\right)^{\frac{2\tilde{\mu}-3}{4}}} \right)-\tilde{\kappa}\tilde{\mu} \right] \right).
    \end{multline*}   
    The asymptotic distribution results for unnormalized $k$-th maximum order statistics are derived by substituting $\gamma$ by $\frac{\gamma-b_N}{a_N}$ in (\ref{k_th_max_cdf}). The unnormalized results are especially useful in practical regimes involving physical parameter scales.

    \subsection{Average throughput and Outage capacity}
    In this sub-section, we derive the expressions for the metrics average throughput and outage capacity using the derived \ac{CDF} of the $k$-th maximum order statistics of \ac{i.n.i.d.} \ac{SNR} \ac{RVs} in $\kappa-\mu$ fading. Consider $\gamma _{(N-k+1:N)}$ be the $k$-th maximum  \ac{SNR} RV of a sequence of \ac{i.n.i.d.}  $\kappa-\mu$ \ac{RVs}. We derive the asymptotic average throughput and outage capacity expressions given the asymptotic distribution of $\gamma _{(N-k+1:N)}$. Also, we assume \ac{SNR} RV to be of the form $\gamma_a\gamma_{(N-k+1:N)}$ where $\gamma_a$ is a constant.
    \subsubsection{Average throughput}
    We derived the asymptotic distribution of normalized $k$-th maximum order statistics of  \ac{i.n.i.d.} \ac{SNR} \ac{RVs} with $\kappa-\mu$ fading in Theorem \ref{thm_kth_max}. Using this, we can determine the average throughput at the receiver. The expression for average throughput is represented as:
    \begin{equation}\label{er_1}
        C_{(N-k+1:N)}=\mathbb{E}\left [\log_2\left ( 1+\gamma_a\gamma_{(N-k+1:N)} \right )  \right ].
    \end{equation}
   
    Given that  $ F_{\gamma_{(N-k+1:N)}}(\gamma)$ is a positive RV, the expectation in (\ref{er_1}) can be evaluated using the following expression for the first moment of a positive RV 
    \begin{equation*}
        \mathbb{E}\left[ x \right]=\int_{0}^{\infty }\left( 1-F_{X}\left( x \right) \right)dx.
    \end{equation*}
    Thus, (\ref{er_1}) can be evaluated as
    \begin{equation}\label{er_2}
       C_{(N-k+1:N)}=\int_{0}^{\infty }\left( 1- \frac{\Gamma\left( k, \exp\left( - \log_2\left ( 1+\gamma_a\gamma_{(N-k+1:N)} \right ) \right) \right)}{\Gamma(k)}  \right)d\gamma.
    \end{equation}
    Therefore, the average throughput can be computed  with the assistance of numerical integration routines.
     
    
  \subsubsection{Outage Capacity}
  The outage probability with  $\gamma_{th}$ as threshold can be derived using the asymptotic distribution of $k$-th maximum order statistics  (\ref{ua3}) as
  \begin{align}
     P_{out}= & \mathbb{P}\left ( \gamma _{a} \gamma_{(N-k+1:N)}\leq \gamma _{th}\right )=F_{\gamma _{(N-k+1:N)}}\left ( \frac{\gamma _{th}}{\gamma _{a}} \right )\nonumber \\ &=\frac{\Gamma \left ( k, \exp \left ( -\frac{ \left ( \frac{\gamma _{th}}{\gamma _{a}}-b_{N} \right )}{a_{N}} \right )\right )}{\Gamma \left ( k \right )}.
 \end{align}
 Similarly, Outage capacity \cite{gu2015rf} can be calculated as 
 \begin{equation}\label{c_out}
    C_{out}=\log _{2}\left ( 1+\gamma _{th} \right )\left ( 1-F_{\gamma _{(N-k+1:N)}}\left (\gamma _{th}  \right ) \right ).
 \end{equation}

    \subsection{Special Cases} \label{sys_model}   
     
     Theorem \ref{thm_kth_max} presents general results for the $k$-th maximum order statistics of \ac{i.n.i.d.} \ac{SNR} \ac{RVs} in $\kappa-\mu$ fading. Here, we look at various special cases.
      \subsubsection{\textbf{i.i.d. case}}
     In this sub-section, we derive the asymptotic distribution of $k$-th maximum order statistics of \ac{i.i.d.} \ac{SNR} \ac{RVs} with $ \kappa-\mu$ fading. These results can be derived as a special case of (Theorem \ref{thm_kth_max}) \ac{i.n.i.d.} $\kappa-\mu$ \ac{RVs} assuming all the parameters identical, i.e., for $\kappa_i=\kappa, \mu_j=\mu$.
      \begin{corollary} \label{iid_thm_kth_max}   
        The asymptotic distribution of normalized $k$-th maximum $\tilde{\gamma }_{\left ( N-k+1:N \right )}$ of a sequence of \ac{i.i.d.} \ac{SNR} \ac{RVs} in  $\kappa-\mu$ fading, as $N \to \infty$ is given by  		
		\begin{equation}
    		F_{\tilde{\gamma}_{(N-k+1:N)}}(\gamma) = \frac{\Gamma(k, \exp(- \gamma))}{\Gamma(k)},
    		\label{iid_k_th_max_cdf}
		\end{equation}
		for the choice of normalizing constants
       $$
        \begin{aligned}
            a_{N} &=\frac{\bar{\gamma}}{(1+\kappa)\mu} \nonumber \\
            b_{N}  &=a_N \left(\log (N)-c_0  \log (\log (N))+c_1 \sqrt{\log (N)}-c_2\right),
        \end{aligned}
        $$
        Here $c_{0}=\frac{-(2\mu-3)}{4}$, $c_1=2 \sqrt{\kappa\mu}$, and  $c_2=\left[ \log\left( \frac{\sqrt{\kappa}^{\mu-\frac{1}{2}}}{\sqrt{1+\kappa}^{\mu-\frac{3}{2}}}\frac{2\sqrt{\pi\mu}}{\left(\frac{a_N}{\bar{\gamma}}\right)^{\frac{2\mu-3}{4}}} \right)-\kappa\mu \right]$.
	\end{corollary}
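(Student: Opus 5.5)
The corollary should follow from Theorem \ref{thm_kth_max} by taking all parameter pairs equal. A direct check through Lemma \ref{thm_order_stat} is also worthwhile, since it makes the constants transparent.

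\textbf{Step 1: specialisation.} Set $P=L=1$ and $\kappa_n=\kappa$, $\mu_n=\mu$ for all $n$. Then:
\begin{itemize}
\item $\tilde{\kappa}=\kappa$ and $\tilde{\mu}=\mu$;
\item $N_{11}=\tilde{N}=N$, so the requirement $\tilde{N}\to\infty$ holds trivially;
\item $\theta_{11}=1$, hence $\tilde{\theta}=1$.
\end{itemize}
Substituting into the expressions for $a_N$, $c_0$, $c_1$, $c_2$ in Theorem \ref{thm_kth_max} gives exactly the stated constants. Indeed, $c_1=2\sqrt{\kappa\mu}$, and the prefactor $1/\tilde{\theta}$ in $c_2$ disappears.

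\textbf{Step 2: the limiting sum.} In the practical form of $\tilde{u}$, only the single term $P_{11}$ survives, and
\[
\tilde{u}(\gamma)=N\left(1-F_\gamma(a_N\gamma+b_N)\right).
\]
So we must show that this quantity converges to $e^{-\gamma}$, and that $\delta_N=1-F_\gamma(a_N\gamma+b_N)\to0$. The second claim is immediate once the first holds.

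\textbf{Step 3: tail asymptotics.} Write $x=a_N\gamma+b_N$ and $z=2(1+\kappa)\mu x/\bar{\gamma}=2(\gamma+c_N)$. Use the large-argument expansion of the Marcum-$Q$ function with $\alpha=\sqrt{2\kappa\mu}$ and $\beta=\sqrt{z}$:
\[
Q_\mu(\alpha,\beta)\sim \frac{1}{\sqrt{2\pi}}\Big(\frac{\beta}{\alpha}\Big)^{\mu-\frac12}\frac{1}{\beta-\alpha}\,e^{-(\beta-\alpha)^2/2}.
\]
This follows from integrating the PDF together with $I_{\mu-1}(t)\sim e^t/\sqrt{2\pi t}$. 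Up to factors that tend to $1$, this gives
\[
1-F_\gamma(x)\sim C\,(\gamma+c_N)^{\frac{2\mu-3}{4}}\exp\!\Big(-(\gamma+c_N)+2\sqrt{\kappa\mu(\gamma+c_N)}-\kappa\mu\Big),
\]
with $C$ an explicit constant in $\kappa$ and $\mu$.

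\textbf{Step 4: matching the constants.} Substitute $c_N=\log N-c_0\log\log N+c_1\sqrt{\log N}-c_2$ and expand each factor:
\begin{itemize}
\item $\sqrt{\gamma+c_N}=\sqrt{\log N}+\frac{c_1}{2}+o(1)$, so $2\sqrt{\kappa\mu(\gamma+c_N)}=c_1\sqrt{\log N}+2\kappa\mu+o(1)$;
\item $(\gamma+c_N)^{(2\mu-3)/4}\sim(\log N)^{(2\mu-3)/4}$.
\end{itemize}
Collecting terms in $N(1-F_\gamma)$:
\begin{itemize}
\item the $\log N$ term cancels against the factor $N$;
\item the $c_1\sqrt{\log N}$ terms cancel;
\item the $\log\log N$ powers cancel by the choice of $c_0$;
\item $c_2$ absorbs $\log C$ and the residual $\kappa\mu$ terms.
\end{itemize}
What remains is $e^{-\gamma}$.

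\textbf{Step 5: conclusion.} Lemma \ref{thm_order_stat} then gives
\[
\sum_{n=0}^{k-1}\frac{e^{-n\gamma}}{n!}e^{-e^{-\gamma}}=\frac{\Gamma(k,e^{-\gamma})}{\Gamma(k)},
\]
which is the claimed distribution.

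\textbf{Main obstacle.} The hardest step is the bookkeeping in Step 4. The cross term $2\sqrt{\kappa\mu}\cdot c_1/2$ must be tracked, and the $a_N/\bar\gamma$ factors inside $c_2$ must be reconciled with the stated formula. In addition, all $o(1)$ corrections must vanish uniformly on compact $\gamma$-sets. If any constant disagrees, I would recompute it directly from the Marcum-$Q$ asymptotic rather than rely on the specialised $c_2$.
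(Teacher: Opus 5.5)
Your proposal is correct and follows the paper's route. The paper proves this corollary by exactly your Step 1: setting $\kappa_n=\kappa$, $\mu_n=\mu$, $\tilde N=N$ gives $\theta_{11}=\tilde\theta=1$ in Theorem~\ref{thm_kth_max}. Your Steps 2--4 are the single-class version of the paper's appendix argument: the Marcum-$Q$ tail asymptotic, followed by cancellation of the $\log N$, $\sqrt{\log N}$ and $\log\log N$ terms via $c_1,c_0$.

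The bookkeeping you flagged does close. With $C=\frac{1}{2\sqrt{\pi}}(\kappa\mu)^{-(2\mu-1)/4}$ and $a_N/\bar\gamma=1/((1+\kappa)\mu)$, the stated $c_2$ reduces to $\log\bigl(2\sqrt{\pi}(\kappa\mu)^{(2\mu-1)/4}\bigr)-\kappa\mu$. Hence $C\,e^{c_2+\kappa\mu}=1$, and the limit is exactly $e^{-\gamma}$ without any recomputation.
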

    \begin{proof}
        This result can be derived by substituting $\kappa_n=\kappa$, $\mu_n=\mu$, and $\tilde{N}=N$ in Theorem \ref{thm_kth_max}.
    \end{proof}
     The asymptotic distribution results for unnormalized $k$-th maximum order statistics are derived by substituting $\gamma$ by $\frac{\gamma-b_N}{a_N}$ in (\ref{iid_k_th_max_cdf}).
     \subsubsection{\textbf{$\kappa$-\ac{i.n.i.d.} and $\mu$-\ac{i.i.d.}$(\kappa_i,\mu_j=\mu)$}}
      Distributions such as Rice, Rayleigh, and \ac{NCCS} can be obtained from $\kappa-\mu$ distribution by considering $\kappa$ to be \ac{i.n.i.d.} and $\mu$ to be of \ac{i.i.d.} 
     \begin{corollary} \label{thm_kth_max_2}   
        The asymptotic distribution of normalized $k$-th maximum $\tilde{\gamma }_{\left ( N-k+1:N \right )}$ of a sequence of \ac{i.n.i.d.} $\kappa$ and \ac{i.i.d.} $\mu$ \ac{RVs} as $N \to \infty$ is given by  		
		\begin{equation}
    		F_{\tilde{\gamma}_{(N-k+1:N)}}(\gamma) = \frac{\Gamma(k,  \tilde{u}(\gamma))}{\Gamma(k)}=\frac{\Gamma(k, \exp(- \gamma))}{\Gamma(k)},
    		\label{k_th_max_cdf_1}
		\end{equation}
		for the choice of normalizing constants
        $$
        \begin{aligned}
            a_{N} &=\frac{\bar{\gamma}}{(1+\tilde{\kappa})\mu} \nonumber \\
            b_{N} &=a_N \left(\log (\tilde{N})-c_0  \log (\log (\tilde{N}))+c_1 \sqrt{\log (\tilde{N})}-c_2\right),
        \end{aligned}
        $$
        Here, $c_{0}=\frac{-(2\mu-3)}{4}$, $c_1=2 \sqrt{\frac{\tilde{\kappa}\mu}{\tilde{\theta}}}$, and  $c_2=\frac{1}{\tilde{\theta}}\left[ \log\left( \frac{\sqrt{\tilde{\kappa}}^{\mu-\frac{1}{2}}}{\sqrt{1+\tilde{\kappa}}^{\mu-\frac{3}{2}}}\frac{2\sqrt{\pi\mu}}{\left(\frac{a_N}{\bar{\gamma}}\right)^{\frac{2\mu-3}{4}}} \right)-\tilde{\kappa}\mu \right]$. Assume $\kappa_n$ takes a finite set of values i.e. $\kappa_n \in\left\{\kappa_1, \kappa_2 \ldots \kappa_P\right\}$ for all $n \in\{1, \ldots N\}$. Choose $\tilde{\kappa}$ to be the smallest among $\left\{\kappa_{1},\cdots,\kappa_{P}\right\}$.  
		Also, 
        $N_{i}=\sum_{n=1}^N \mathbb{I}_{(\kappa_n=\kappa_i)} \quad  1 \leq i \leq P $ and $\tilde{N} = \sum\limits_{n=1}^N \mathbb{I}_{(\kappa_n=\tilde{\kappa})}$. Here, $N_{i}$ represents the number of times  $\left(\kappa_i\right)$ occurs among  $N$ values, and we assume $\tilde{N}\to \infty $, as $N \to \infty $. Consider $\theta_{i}=\frac{\left ( 1+\kappa_i \right )}{\left ( 1+\tilde{\kappa} \right )}$, and $\tilde{\theta}=\min \left \{ \theta_{i} \right \}_{i=1}^{P}$.
	\end{corollary}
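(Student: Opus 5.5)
The natural route is to obtain Corollary~\ref{thm_kth_max_2} as the special case $L=1$ of Theorem~\ref{thm_kth_max}. We set $\mu_n=\mu$ for every $n$, so that $\tilde{\mu}=\mu$. The pair $(\kappa_i,\mu_j)$ then collapses to $\kappa_i$, with $N_{ij}=N_i$ and $\theta_{ij}=\theta_i=(1+\kappa_i)/(1+\tilde{\kappa})$. Because $\tilde{\kappa}$ is the smallest $\kappa$, every $\theta_i\ge 1$, and $\tilde{\theta}=1$ is attained at $\kappa_i=\tilde{\kappa}$. Substituting these into $a_N$, $b_N$, $c_0$, $c_1$, $c_2$ of Theorem~\ref{thm_kth_max} gives exactly the constants stated. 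The limit $\Gamma(k,e^{-\gamma})/\Gamma(k)$ is inherited unchanged.

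To keep the argument self-contained, I would also sketch the direct verification through Lemma~\ref{thm_order_stat}.

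\textbf{Step 1 (tail of each term).} For large argument I would use the Marcum-$Q$ asymptotic $Q_\mu(a,b)\sim \frac{1}{\sqrt{2\pi}}\, \frac{b^{\mu-1/2}}{a^{\mu-1/2}}\,\frac{1}{b-a}\,e^{-(b-a)^2/2}$ as $b\to\infty$. With $a=\sqrt{2\kappa_i\mu}$ and $b=\sqrt{2(1+\kappa_i)\mu x/\bar{\gamma}}$, this gives $1-F_{\gamma_n}(x)\approx C_i\, x^{(2\mu-3)/4}\exp\bigl(-\tfrac{(1+\kappa_i)\mu x}{\bar{\gamma}}+2\mu\sqrt{\kappa_i(1+\kappa_i)x/\bar{\gamma}}-\kappa_i\mu\bigr)$, where $C_i$ is an explicit constant.

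\textbf{Step 2 (dominant group).} Insert $x=a_N(\gamma+c_N)$ with $a_N=\bar{\gamma}/((1+\tilde{\kappa})\mu)$. The group $\kappa_i=\tilde{\kappa}$ then has exponent $-(\gamma+c_N)+2\sqrt{\tilde{\kappa}\mu}\sqrt{\gamma+c_N}+\dots$. Expanding $\sqrt{\gamma+c_N}=\sqrt{c_N}+o(1)$ and $c_N\sim\log\tilde{N}$, the choice of the $\log\log$, $\sqrt{\log}$ and constant terms in $b_N$ makes $\tilde{N}\,(1-F)\to e^{-\gamma}$.

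\textbf{Step 3 (other groups and the hypotheses).} For $\kappa_i>\tilde{\kappa}$ we have $\theta_i>1$, so the contribution behaves like $N_i\,\tilde{N}^{-\theta_i}e^{O(\sqrt{\log\tilde{N}})}$. This is the main obstacle: the sum reduces to $e^{-\gamma}$ only if $N_i=o\bigl(\tilde{N}^{\theta_i}e^{-c\sqrt{\log\tilde{N}}}\bigr)$, which is implicitly assumed, as in the remark on $P_{ij}$ after Theorem~\ref{thm_kth_max}. I would state this growth condition explicitly and otherwise keep the surviving terms in $\tilde{u}(\gamma)=\sum_i P_i e^{-\theta_i\gamma}$. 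Condition~\eqref{ua1} holds because each tail vanishes uniformly over the finitely many parameter values. Lemma~\ref{thm_order_stat} then gives $\tilde{\phi}_k=\sum_{n<k}\tilde{u}^n e^{-\tilde{u}}/n!=\Gamma(k,\tilde{u})/\Gamma(k)$, which is non-degenerate.
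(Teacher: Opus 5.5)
Your main route, specializing Theorem~\ref{thm_kth_max} to $L=1$ (so $\tilde{\mu}=\mu$, $N_{ij}=N_i$, $\theta_{ij}=\theta_i$, $\tilde{\theta}=1$), is exactly the paper's one-line proof, and your direct sketch follows the Appendix argument: the Marcum-$Q$ tail, then the dominant group $\kappa_i=\tilde{\kappa}$ yielding $e^{-\gamma}$, with $c_0,c_1,c_2$ checking out. Your Step~3 caveat is also correct. The paper simply asserts $N_i/\tilde{N}^{\theta_i}\to 0$ for $\theta_i>1$, but the non-dominant groups also carry a growing factor $e^{c\sqrt{\log\tilde{N}}}$ with $c=2\sqrt{\mu}\bigl(\sqrt{\kappa_i\theta_i}-\theta_i\sqrt{\tilde{\kappa}}\bigr)>0$. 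Hence $\tilde{N}\to\infty$ alone does not suffice (e.g.\ $\tilde{N}\approx\log N$ makes $\tilde{u}(\gamma)=\infty$), and an explicit growth condition such as yours (or simply $\tilde{N}\ge cN$) is genuinely needed for the second equality $\tilde{u}(\gamma)=e^{-\gamma}$.
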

	\begin{proof}
    This result can be derived by substituting $\mu_n=\mu$ in Theorem \ref{thm_kth_max}.
    \end{proof}
    \paragraph{Rice distribution:} Rice distribution can be obtained by substituting $\mu=1$ in corollary \ref{thm_kth_max_2}.
    \paragraph{\ac{NCCS} distribution:}
    As shown in \cite{peppas2011sum}, \ac{NCCS} distributed \ac{RVs} can be obtained from $\kappa-\mu$ distributed \ac{RVs}. If $Y$ is a \ac{NCCS} RV with $m$ \ac{dof} having non centrality parameter $\lambda$, then the RV $\gamma=\left ( \frac{\bar{\gamma}}{2\mu\left ( 1+\kappa \right )} \right )Y$ follows $\kappa-\mu$ fading with parameters $\left(\kappa,\mu,\bar{\gamma}\right)$. Note that non-centrality parameter $\lambda=2\kappa\mu$ and \ac{dof} $m=2\mu$. For \ac{NCCS} distribution, \ac{dof} needs to be an integer hence $\mu$ in this case will be $m/2$. Here, fixing \ac{dof} and considering \ac{i.n.i.d.} non-centrality parameters lead to \ac{i.n.i.d.} $\kappa$ and \ac{i.i.d.} $\mu$. The corresponding $k$-th maximum order statistics can be obtained from corollary \ref{thm_kth_max_2}.
    \subsubsection{\textbf{$\kappa$-\ac{i.i.d.} and $\mu$-\ac{i.n.i.d.}$(\kappa_i=\kappa,\mu_j)$}}
     The Nakagami-m distribution can be obtained by setting $\kappa$ \ac{i.i.d.} $\left(\kappa \rightarrow 0\right)$ and $\mu$ to be of \ac{i.n.i.d.}. 
     \begin{corollary} \label{thm_kth_max_4}   
        The asymptotic distribution of normalized $k$-th maximum $\tilde{\gamma }_{\left ( N-k+1:N \right )}$ of a sequence of \ac{i.n.i.d.} $\mu$ and \ac{i.i.d.} $\kappa$ \ac{RVs} as $N \to \infty$ is given by  		
		\begin{equation}
    		F_{\tilde{\gamma}_{(N-k+1:N)}}(\gamma) = \frac{\Gamma(k,  \tilde{u}(\gamma))}{\Gamma(k)}=\frac{\Gamma(k, \exp(- \gamma))}{\Gamma(k)},
    		\label{k_th_max_cdf_2}
		\end{equation}
		for the choice of normalizing constants
        $$
        \begin{aligned}
            a_{N} &=\frac{\bar{\gamma}}{(1+\kappa)\tilde{\mu}} \nonumber \\
            b_{N} &=a_N \left(\log (\tilde{N})-c_0  \log (\log (\tilde{N}))+c_1 \sqrt{\log (\tilde{N})}-c_2\right),
        \end{aligned}
        $$
        Here, $c_{0}=\frac{-(2\tilde{\mu}-3)}{4}$, $c_1=2 \sqrt{\frac{\kappa\tilde{\mu}}{\tilde{\theta}}}$, and  $c_2=\frac{1}{\tilde{\theta}}\left[ \log\left( \frac{\sqrt{\kappa}^{\tilde{\mu}-\frac{1}{2}}}{\sqrt{1+\kappa}^{\tilde{\mu}-\frac{3}{2}}}\frac{2\sqrt{\pi\tilde{\mu}}}{\left(\frac{a_N}{\bar{\gamma}}\right)^{\frac{2\tilde{\mu}-3}{4}}} \right)-\kappa\tilde{\mu} \right]$. Assume $\mu_n$ takes a finite set of values i.e. $\mu_n \in\left\{\mu_1, \mu_2 \ldots \mu_L\right\}$  for all  $n \in\{1, \cdots N\}$. Choose $\tilde{\mu}$ to be the smallest among $\left\{\mu_{1},\cdots,\mu_{L}\right\}$.  
		Also, 
        $N_{j}=\sum_{n=1}^N \mathbb{I}_{( \mu_n=\sigma_j)} \quad   1 \leq j \leq L$ and $\tilde{N} = \sum\limits_{n=1}^N \mathbb{I}_{(\mu_n=\tilde{\mu})}$. Here, $N_{j}$ represents the number of times  $\mu_j$ occurs among  $N$ values, and we assume $\tilde{N}\to \infty $, as $N \to \infty $. Consider $\theta_{j}=\frac{\mu_j}{\tilde{\mu}}$, and $\tilde{\theta}=\min \left \{ \theta_{j} \right \}_{(j=1)}^{ L}$.
	\end{corollary}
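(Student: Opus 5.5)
Corollary \ref{thm_kth_max_4} follows by specializing Theorem \ref{thm_kth_max} to the case in which every $\kappa_n$ equals a common value $\kappa$. Concretely, I set $P=1$ and $\kappa_1=\tilde{\kappa}=\kappa$, and let only $\mu_n$ range over $\{\mu_1,\dots,\mu_L\}$. The pair counts $N_{ij}$ then reduce to $N_j$, the number of indices with $\mu_n=\mu_j$, and $\tilde{N}$ becomes the number of indices with $\mu_n=\tilde{\mu}$. Because the factor $(1+\kappa)$ now appears in both numerator and denominator, the ratios become $\theta_{ij}=\frac{(1+\kappa)\mu_j}{(1+\kappa)\tilde{\mu}}=\frac{\mu_j}{\tilde{\mu}}=\theta_j$. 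Since $\tilde{\mu}$ is the smallest $\mu_j$, we get $\tilde{\theta}=\min_j\theta_j=1$. Substituting into $a_N$, $c_0$, $c_1$ and $c_2$ of Theorem \ref{thm_kth_max} gives exactly the stated constants, with $\tilde{\kappa}$ replaced by $\kappa$. The stated form with $\tilde{\theta}$ kept explicit is consistent with this, since it equals one.

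Next, I check that the hypotheses of Theorem \ref{thm_kth_max}, and hence of Lemma \ref{thm_order_stat}, still hold.
\begin{itemize}
\item The parameter set is finite, since $P=1$ and $L<\infty$.
\item $\tilde{N}\to\infty$ holds by assumption.
\item $\delta_N\to 0$ holds because $a_N\gamma+b_N\to\infty$ and each of the finitely many Marcum-$Q$ tails vanishes.
\end{itemize}
The limit $\tilde{u}(\gamma)$ is obtained by the term-by-term argument of Appendix \ref{proof_main}:
\begin{equation*}
\tilde{u}(\gamma)=\lim_{N\to\infty}\sum_{j=1}^{L}N_j\,Q_{\mu_j}\!\left(\sqrt{2\kappa\mu_j},\sqrt{2(1+\kappa)\mu_j(a_N\gamma+b_N)/\bar{\gamma}}\right).
\end{equation*}
Each term uses the large-argument expansion of the Marcum-$Q$ function, $Q_\mu(a,b)\approx \frac{(b/a)^{\mu-1/2}}{\sqrt{2\pi}\,a}\,e^{-(b-a)^2/2}$ (up to the paper's normalization). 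The group $j$ with $\mu_j=\tilde{\mu}$ contributes $\exp(-\gamma)$, by the defining choice of $b_N$. Groups with $\theta_j>1$ contribute $N_j\tilde{N}^{-\theta_j}e^{O(\sqrt{\log\tilde{N}})}\to 0$, because $N_j\le N$. This step needs $N$ to grow at most polynomially in $\tilde{N}$ with exponent below $\theta_j$; I take this as implicit in the theorem's "for all practical purposes" remark. Hence $\tilde{u}(\gamma)=e^{-\gamma}$.

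Finally, Lemma \ref{thm_order_stat} gives
\begin{equation*}
\tilde{\phi}_k(\gamma)=\sum_{n=0}^{k-1}\frac{e^{-n\gamma}}{n!}\exp\!\left(-e^{-\gamma}\right)=\frac{\Gamma(k,e^{-\gamma})}{\Gamma(k)},
\end{equation*}
using the standard identity for the regularized upper incomplete gamma function with integer $k$. This is a non-degenerate distribution, which yields (\ref{k_th_max_cdf_2}).

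The main obstacle is the Nakagami-$m$ limit $\kappa\to 0$ that motivates this corollary. The constants $c_1=2\sqrt{\kappa\tilde{\mu}}$ and $c_2$, which contains $\log\sqrt{\kappa}^{\tilde{\mu}-1/2}$, degenerate there, and the Marcum-$Q$ asymptotics with non-centrality $\sqrt{2\kappa\mu}$ are no longer valid. For fixed $\kappa>0$ the corollary is a direct specialization and needs only the substitution and verification above. For $\kappa\to 0$ I would not take the limit in the constants. Instead, I would redo the tail estimate with the Gamma tail $\Gamma(\mu,x)/\Gamma(\mu)\sim x^{\mu-1}e^{-x}/\Gamma(\mu)$, which gives $b_N=a_N\big(\log\tilde{N}+(\tilde{\mu}-1)\log\log\tilde{N}-\log\Gamma(\tilde{\mu})\big)$. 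The same Gumbel limit $\Gamma(k,e^{-\gamma})/\Gamma(k)$ then follows by the identical domination argument across the $\mu_j$ groups.
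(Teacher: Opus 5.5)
Your proposal is correct and follows the paper's proof, which consists solely of substituting $\kappa_n=\kappa$ into Theorem~\ref{thm_kth_max}, so that $\theta_j=\mu_j/\tilde{\mu}$ and $\tilde{\theta}=1$. Your extra remarks are sound and go beyond what the paper shows: $N_j/\tilde{N}^{\theta_j}\to 0$ is an unstated growth assumption that the paper merely asserts, and the Nakagami case $\kappa\to 0$ cannot be read off from constants containing $\log\kappa$ but needs the Gamma tail, giving $b_N=a_N(\log\tilde{N}+(\tilde{\mu}-1)\log\log\tilde{N}-\log\Gamma(\tilde{\mu}))$. One slip in your re-verification: the large-argument Marcum-$Q$ expansion has $\sqrt{2\pi}\,(b-a)$ in the denominator, not $\sqrt{2\pi}\,a$; with $a$ there, the $\tilde{\mu}$ group's power of $\log\tilde{N}$ would be $(2\tilde{\mu}-1)/4$, and $c_0=-(2\tilde{\mu}-3)/4$ would leave an uncancelled factor $(\log\tilde{N})^{1/2}$.
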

	\begin{proof}
     This result can be derived by substituting $\kappa_n=\kappa$ in Theorem \ref{thm_kth_max}.
    \end{proof}    
    \section{Applications}
    Since our results are for the  \ac{i.n.i.d.} case they efficiently capture more realistic heterogeneous environments. In all applications we recover the  \ac{i.i.d.} case as a special case, while so far all existing work for these applications is restricted to only the \ac{i.i.d.} case.
     \subsection{Applications with Gamma }
\subsubsection{Backscatter}
 An Ambient Backscatter System (AmBS) typically includes an ambient RF source (such as a TV broadcasting tower, cellular base station, or Wi-Fi hotspot), a passive backscatter tag, and a reader. In contrast to traditional radios that rely on a dedicated carrier for transmission, a passive tag functions by harvesting RF energy from an existing ambient source and embedding its information onto the reflected signal.  Because it needs very little power, AmBS is seen as a useful technology for battery-free and low-cost IoT applications such as smart cities, environmental monitoring, and wearable devices. 
    In \cite{ali2024ergodic,10845856}, authors considered an AmBS model consisting of an ambient RF source $S$, a cluster of $N$ passive tags $\{T_i\}_{i=1}^N$, and a reader $R$. Since not all tags experience the same channel conditions, the system adopts an opportunistic tag selection strategy, where only the tag with the strongest channel to the reader is allowed to communicate at a given time. This approach ensures improved reliability and better utilization of the available RF energy. Specifically, the selected tag, denoted as $i^{*}$, is determined according to
    \begin{equation*}
        i^{*}=\arg\max_{i\in \left\{ 1,\cdots,N \right\}} \left| h_{t_{i}r} \right|^{2},
    \end{equation*}    
    where $\left| h_{t_{i}r} \right|^{2}$ denotes the instantaneous channel power gain (gamma distributed RV) between tag $T_i$ and the reader $R$.
   It is well known that a Gamma RV can be treated as a special case of the derived $\kappa-\mu$ results, i.e., with $\kappa \rightarrow 0$ and $\mu$ following the \ac{i.n.i.d.} case as stated in Corollary \ref{thm_kth_max_4}. Specifically, a Gamma RV with parameters $(\alpha, \theta)$ corresponds to a $\kappa-\mu$ distribution with $\kappa \rightarrow 0$, $\mu=\alpha$, and $\bar{\gamma}=\alpha \theta$. Consequently, the derived asymptotic framework readily specializes to \ac{i.n.i.d.} Gamma random variables. As a result, a wide variety of Gamma-distributed wireless communication applications can be investigated directly as special cases of the proposed analysis.
\subsubsection{Multi-RIS ( multiple reconfigurable intelligent surfaces (RISs))}
   The design and performance aspects of networks assisted by multiple RISs have been extensively investigated in previous studies \cite{do2021multi,10041765}. In  \cite{do2021multi}, the authors introduce two multi-RIS-aided schemes, namely the exhaustive RIS-aided (ERA) and opportunistic RIS-aided (ORA) schemes, and model their end-to-end (e2e) channels as Gamma distributed.  In the ERA scheme, every RIS takes part in the transmission, whereas in the ORA scheme, only one optimal RIS is selected. The ORA approach helps conserve resources while still maintaining a good trade-off between performance and efficiency. Optimal RIS selection is done using 
    \begin{equation*}
        n^{*}=\arg\max_{n\in \left\{ 1,\cdots,N \right\}} V_{n},
    \end{equation*}
     where $V_n$ follows gamma distribution. Authors derived  $F_{M_V}(x)$ as the maximum order statistics of gamma RVs as given in \ref{ORA}.
    \begin{align}\label{ORA}
        F_{M_V}(x) =& \Pr \left( \max_{1 \leq k \leq N} V_k \leq x \right) 
        = \prod_{k=1}^{N} F_{V_k}(x) \nonumber \\  &
        \approx  \prod_{k=1}^{N} \frac{\gamma \!\left(L_k \alpha_{U_k},\, \beta_{U_k} x \right)}{\Gamma(L_k \alpha_{U_k})} .
    \end{align}    
    Further, the authors obtained the CDF of $\gamma_{ORA}$ using M-staircase approximation, which is given by: \begin{equation}\label{comp} F_{\gamma_{ORA}}\left ( x \right )\approx \sum_{m=1}^{M}\int_{\frac{m-1}{M}x}^{\frac{m}{M}x}f_{h_{0}}\left ( y \right )\int_{0}^{\frac{M-m+1}{M}x}f_{M_V}\left ( z \right )dzdy, \end{equation} where $f_{M_V}\left ( z \right )\approx \frac{\mathrm{d} }{\mathrm{d} z}\left ( \prod_{k=1}^{N} \frac{\gamma \!\left(L_k \alpha_{U_k},\, \beta_{U_k} z \right)}{\Gamma(L_k \alpha_{U_k})} \right )$. It can be observed that these expressions involve multiple integrals and derivatives, making them quite complex. While \cite{sagar2026multi} characterizes the selected RIS link \ac{SNR} using a \ac{NCCS} distribution with one degree of freedom, recent literature increasingly relies on the Gamma distribution for performance analysis. As a result, the derived asymptotic framework provides a unified analytical tool for investigating such Gamma-distributed multi-RIS systems. In contrast, the results derived in Corollary \ref{thm_kth_max_4} can be directly used to obtain the maximum order statistics of Gamma random variables in a much simpler and more tractable form, thereby providing a cleaner alternative to the above formulation.
    \subsection{Applications with Nakagami-m}
  \subsubsection{UAV-aided relay and antenna selection for cognitive networks}
  Consider the system model introduced in \cite{ji2020secrecy}, where the authors employed a novel relaying technique called UAV-assisted relay, wherein the relay nodes are mobile. The cognitive radio system model comprises a secondary user source (S), a secondary user destination (D) with $M$ antennas, and $N$ UAV-assisted relays. Meanwhile, a passive eavesdropper intercepts the relayed signal within the secondary network. Initially, node S transmits signals to relay node $R_j$, where $1 \leq j \leq N$. The UAV-assisted relay operates in Decode-and-Forward (DF) mode and utilizes a relay selection strategy to pick one of the $N$ UAVs for relaying information signals to the destination node. All channels are assumed to exhibit i.i.d. Nakagami-m fading distribution. The SNR from UAV relay node $R_j$ to destination node D and UAV relay node $R_j$ to eavesdropper can be written as 
  \begin{equation}
      \gamma_{R_j D}=\frac{P_{R_j}}{\sigma_{D_i}^2} \max_{ M} \left ( \left \| h_{R_j D_i} \right \|^2 \right ),\\ \quad
      \gamma_{R_j E}=\frac{P_{R_j}}{\sigma_{E}^2} \max_{ M} \left ( \left \| h_{R_j E} \right \|^2 \right )
  \end{equation}
  The destination node's antenna employs a maximization criterion, meaning it selects the optimal antenna for transmission. 
Given the consideration of the Nakagami-m distribution, we can express this as a specific instance of $\kappa-\mu$ fading by setting $\kappa$ to be i.i.d. with $\kappa \rightarrow 0$, and $\mu$ to be i.n.i.d. 
By applying Corollary \ref{thm_kth_max_4} with $k=1$, we can compute the maximum order statistics when all $\mu$ are the same. 
\subsection{Applications with NCCS}
  \subsubsection{Wireless systems utilizing Decode-and-Forward (DF) opportunistic relays (ORs)} Consider a wireless system similar to the one described in \cite{le2019opportunistic}, where communication from the source to the destination is facilitated through a group of $N$ DF ORs. Information is sent from the source to the best-performing relay and then relayed from this selected relay to the destination. Cooperative relay networks have received significant attention in recent literature, particularly employing SC techniques. The authors in \cite{le2019opportunistic} investigated DF-OR networks within the context of generalized-Rician fading, which is capable of representing both Line-of-Sight (LoS) and Non-Line-of-Sight (NLoS) fading scenarios. It's worth mentioning that the fading gains in a generalized-Rician fading environment follow an NCCS distribution. However, managing the intricacies of such environments can be challenging. Therefore, the authors investigated a scenario where the ORs function in i.i.d. generalized-Rician fading environments, with a specific focus on situations involving only even dof. Since the channel fading gains follow the NCCS distribution, the CDF depicted in \cite[(68)]{le2019opportunistic} can be seen as a special case of the $\kappa-\mu$ fading, where $\mu$ is an even number. For a given set of $N$ relays, the SNR of each relay is denoted by $\gamma_i$ for $i=1,\ldots,N$. The CDF representing the SNR of the best relay is $\max _{1< i< N} \gamma_i $. We can once more employ Theorem-\ref{thm_kth_max} to obtain the maximum order statistics of \ac{SNR} of the best relay. It's worth noting that the authors of \cite{le2019opportunistic} derived their results for i.i.d. RVs with even dof. In fact, Theorem-\ref{thm_kth_max} can extend these findings to i.n.i.d. RVs with any dof. Furthermore, we can extend these findings to include the $k$-th maximum order statistics.
    \subsection{Applications with $\kappa-\mu$}
  \subsubsection{MIMO systems with antenna selection}
  Consider a MIMO system similar to the one outlined in \cite{kumbhani2016performance}, featuring $N_t$ transmit antennas and $N_r$ receive antennas. In this scenario, $h_{ji}$ signifies the channel fading coefficient between the $i^{th}$ transmitting antenna and the $j^{th}$ receiving antenna. Furthermore, $\gamma_{ji} = \left | h_{ji} \right |^{2}$ represents the received SNR. 
  All channels are assumed to undergo $\kappa-\mu$ fading. The transmitter is assumed to have only one RF chain available, and it selects the antenna that maximizes the received SNR. MIMO with \ac{TAS} and maximal ratio combining (MRC) achieves enhanced diversity with reduced complexity. The authors in \cite{kumbhani2016performance} examined two types of systems: those with joint transmit and receive antenna selection and those employing \ac{TAS}/MRC systems. In joint transmit and receive antenna selection systems, only the antenna corresponding to the best link is chosen at the receiver, defined as
\begin{equation}
    \gamma_{max}=\max_{1\leq i \leq N_t , 1\leq j \leq N_r} \gamma_{ji}.
\end{equation}

In \ac{TAS}/MRC systems, MRC is utilized at the receiver, and the transmitting antenna that maximizes the SNR at the receiver is selected for communication, represented as,
\begin{equation}
    \gamma_{max}^{TAS}=\max_{1\leq i \leq N_t }    \left ( \sum_{j=1}^{N_r} \gamma_{ji} \right ).
\end{equation} 

Both systems necessitate characterizing the maximum order statistics of \ac{SNR} \ac{RVs} in $\kappa-\mu$ fading. 
By setting $k=1$ in Corollary \ref{iid_thm_kth_max}, the maximum order statistics considered in \cite{kumbhani2016performance} are readily obtained. Notably, our framework extends beyond the \ac{i.i.d.} assumption and is sufficiently general to model \ac{i.n.i.d.} $\kappa$–$\mu$ \ac{RVs}. 
\subsubsection{Indoor Body-to-Body Communications}
The authors in \cite{cotton2009channel} showed that the $\kappa-\mu$ distribution offers the most accurate description of small-scale fading in body-to-body channels. While characterizing multiple-antenna body-worn systems, the authors utilized SC to pick the branch with the highest SNR at the receiver. 
\begin{equation}
    R=\max\left( r_{1}, r_{2},\cdots,r_{M} \right)
\end{equation}
Here, $r_M$  represents the signal level measured in the $M^{th}$
 branch of the diversity receiver.
 Our results can be used to evaluate system performance in such scenarios.





    \section{Simulation Results} \label{simulations}
    \begin{figure}[t]
        \centering
        \includegraphics[scale=0.5]{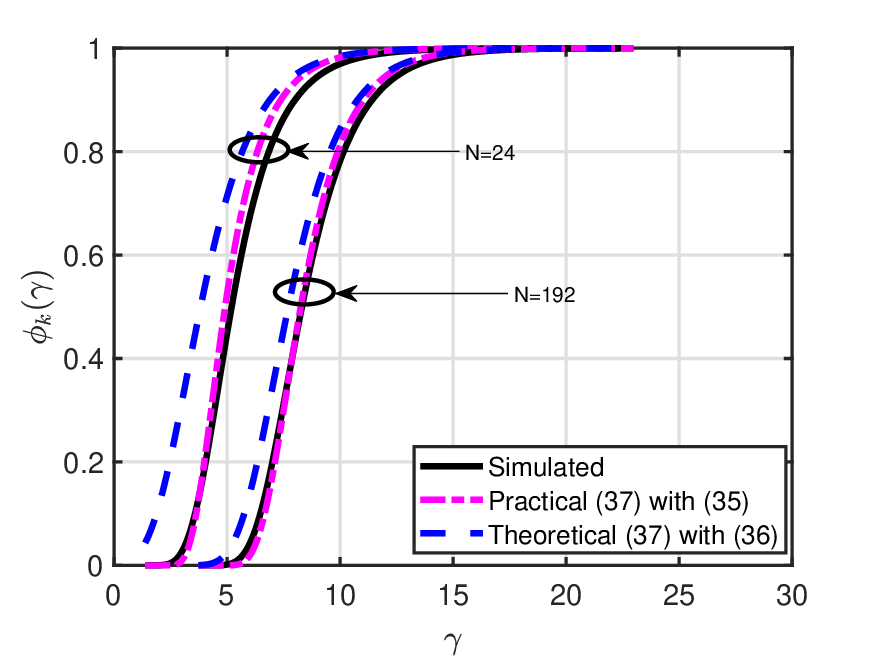}
        \caption{ $ \phi_{k}\left(\gamma\right)$ with $k=1$ using $ \tilde{u}\left ( \gamma \right )$(equations  (\ref{comp_1}) ,(\ref{comp_2}))  for i.n.i.d. RVs }
        \label{fig:inid_comp}
    \end{figure}
     This section presents the simulation results of the asymptotic distribution of $k$-th maximum order statistics of \ac{i.n.i.d.} \ac{SNR} \ac{RVs} in $\kappa-\mu$ fading. Here, we also compare the theoretical \ac{CDF} derived in Section II with the simulated one. For this comparison, we employ $N$ \ac{i.n.i.d.} $\kappa-\mu$ \ac{RVs} characterized by parameters $(\kappa_i, \mu_j)$, where $i$ and $j$ range from 1 to 3.  In Fig. \ref{fig:inid_comp}, we compare the maximum order statistics $(k=1)$ of simulated \ac{CDF} with practical ((\ref{ua3}) with (\ref{comp_1})) and theoretical ((\ref{ua3}) with (\ref{comp_2})) \ac{CDF}s. Note that Eq. (\ref{comp_2}) is a theoretical expression derived for large $N$, whereas Eq. (\ref{comp_1}) is an exact expression that remains valid even for small $N$.  We present the results for \ac{i.n.i.d.}   $\kappa-\mu$ \ac{RVs} with $\kappa_1=0.3$, $\kappa_2=0.7$, $\kappa_3=1.2$ and $\mu_1=0.75$, $\mu_2=1.25$, $\mu_3=1.5$ for $N=24, 192$ with $\bar{\gamma}=2$ dB.\\     
     Note that for all practical purposes, we consider $(\kappa_i, \mu_j)$ come from a finite $N$. Hence, the practical equation curve is much closer to the simulated curve. We can also observe that the difference is very small, and the gap reduces further as we increase $N$.
      \begin{figure}[h]
        \centering
        \includegraphics[scale=0.5]{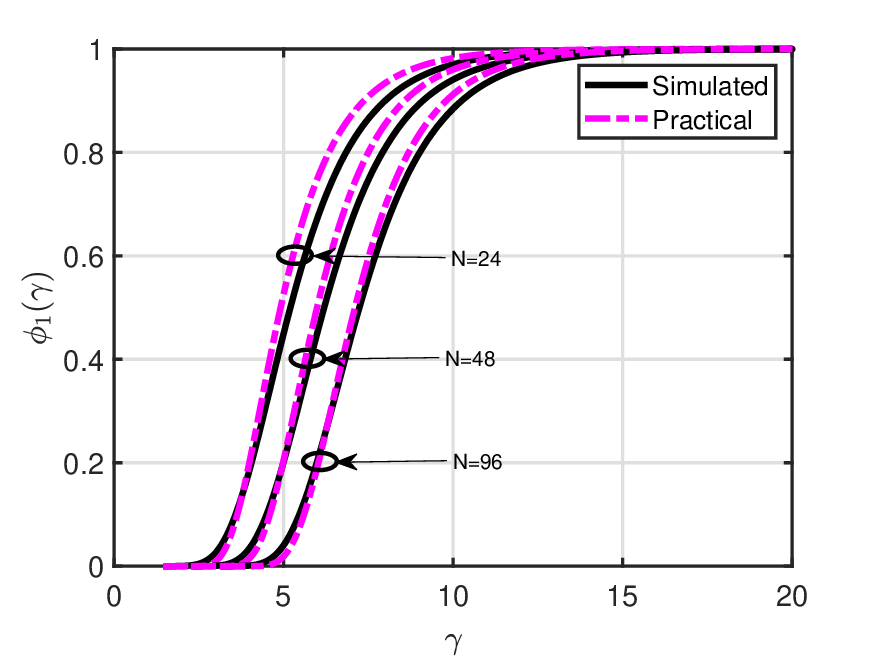}
        \caption{Theoretical and simulated CDFs of the maximum $(k=1)$ order statistics over \ac{i.n.i.d.}$\kappa-\mu$ \ac{RVs} for different N  }
        \label{fig:inid_N}
    \end{figure} 
     \begin{figure}[h]
        \centering
        \includegraphics[scale=0.5]{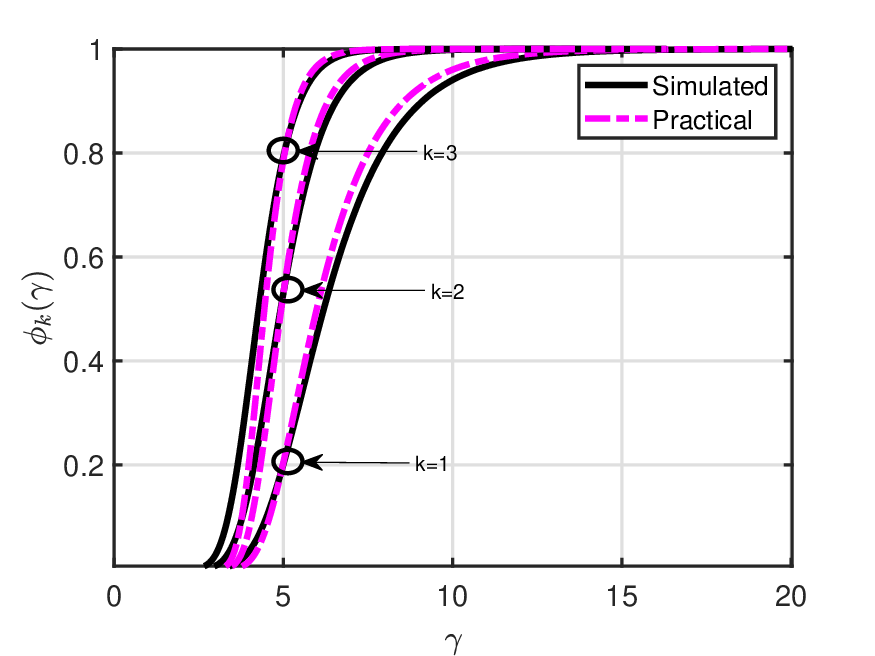}
        \caption{ Theoretical and simulated CDFs of the $k$-th maximum order statistics of \ac{i.n.i.d.} $\kappa-\mu$ \ac{RVs} for different $ k $ }
        \label{fig:inid_k}
    \end{figure}
     \subsection{$k$-th maximum \ac{i.n.i.d.} results}
     Fig. \ref{fig:inid_N} compares the simulated and theoretical \ac{CDF}s of maximum order statistics $(k=1)$ for \ac{i.n.i.d.} $\kappa-\mu$ \ac{RVs} with various values of $N$. Note that as $N$ grows, the  difference between the simulated and theoretical curves  diminishes as expected.   
     Fig. \ref{fig:inid_k} illustrates the comparison between simulated and theoretical CDFs for the $k$-th maximum order statistics of  \ac{i.n.i.d.} $\kappa-\mu$  \ac{RVs} with $N=48$. Results are presented for $k=1,2,3$. Notably, it's observable that the agreement between the simulated curves and the first-order statistics is notably stronger than that of the second and third-order statistics.       
   
    \subsection{Special cases results}
     Here, we present the results for the corollaries derived in section III.
      \subsubsection{$k$-th maximum \ac{i.i.d.} results}
       In Fig. \ref{fig:iid_N}, we compare the simulated and theoretical CDFs of the maximum order statistics $(k=1)$ for \ac{i.i.d.} $\kappa-\mu$ \ac{RVs} across various $N$ values. The simulations were conducted with $\kappa=0.2$ and $\mu=1$, with $\bar{\gamma}=2$ dB.  Again, as $N$ increases, the gap between the simulated and theoretical curves diminishes as expected.

     \begin{figure}[h]
        \centering
        \includegraphics[scale=0.5]{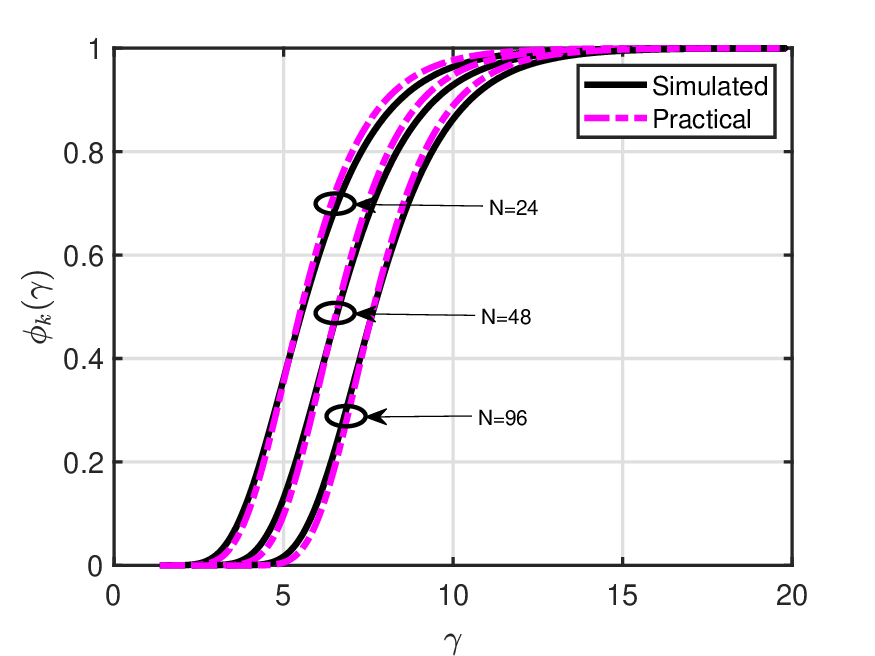}
        \caption{Theoretical and simulated CDFs of the maximum $(k=1)$ order statistics over \ac{i.i.d.} $\kappa-\mu$ \ac{RVs} for different N  }
        \label{fig:iid_N}
    \end{figure}
     Fig. \ref{fig:iid_k} depicts the comparison between simulated and theoretical CDFs for the $k$-th maximum order statistics of \ac{i.i.d.} $\kappa-\mu$ \ac{RVs} with $N=48$. Results are shown for $k=1,2,3$. 
     \begin{figure}[h]
        \centering
        \includegraphics[scale=0.5]{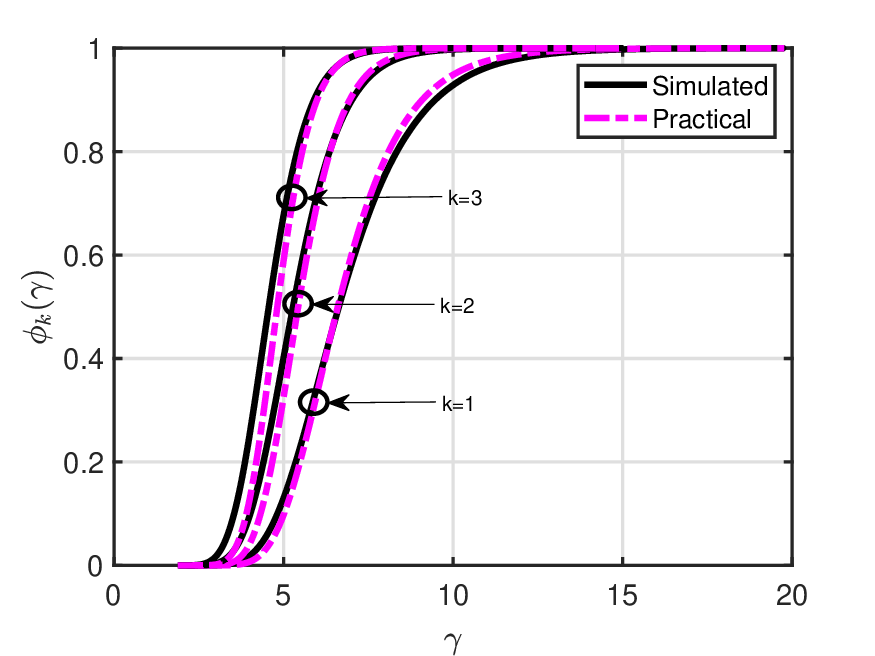}
        \caption{ Theoretical and simulated CDFs of the $k$-th maximum order statistics of \ac{i.i.d.} $\kappa-\mu$ \ac{RVs} for different $ k $ }
        \label{fig:iid_k}
    \end{figure}
    \subsubsection{$\kappa$-\ac{i.n.i.d.} and $\mu$-\ac{i.i.d.}$(\kappa_i,\mu_j=\mu)$}
    \begin{figure}[h]
        \centering
        \includegraphics[scale=0.5]{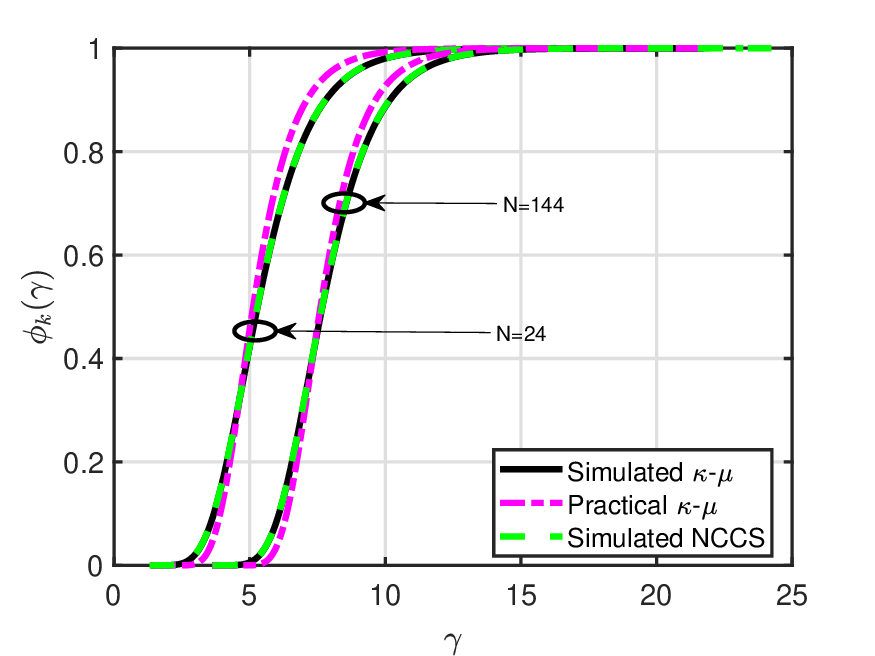}
        \caption{ Comparison of \ac{NCCS} with $\kappa-\mu$ \ac{RVs} }
        \label{fig:nccs_km}
    \end{figure}
     \begin{figure}[h]
        \centering
        \includegraphics[scale=0.5]{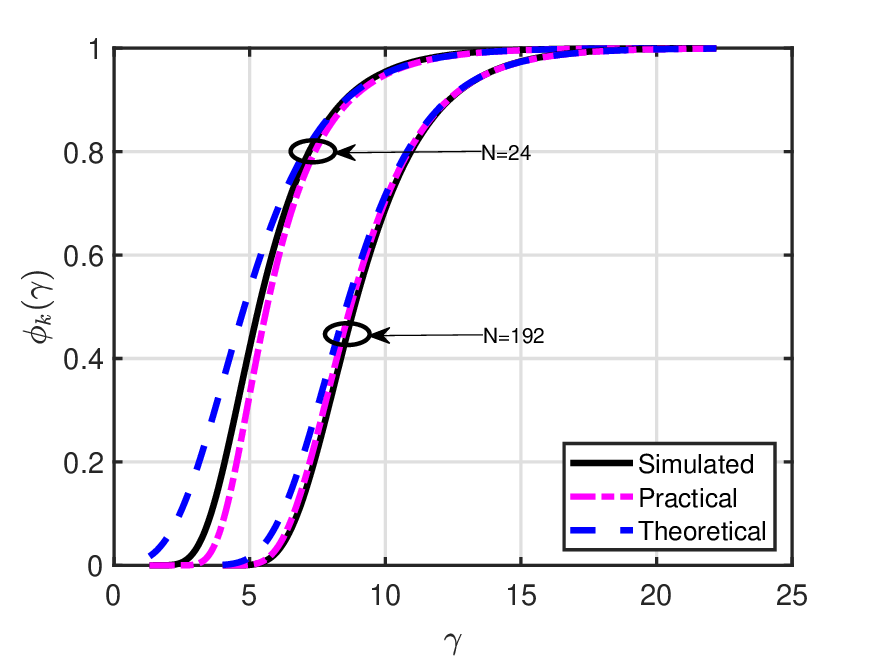}
        \caption{$\kappa$-\ac{i.i.d.} and $\mu$-\ac{i.n.i.d.}  }
        \label{fig:naka}
    \end{figure}
    \ac{NCCS} \ac{RVs} can be expressed in terms squared $\kappa-\mu$ \ac{RVs}.  Fig. \ref{fig:nccs_km} presents the maximum order statistics of \ac{NCCS} \ac{RVs} with squared $\kappa-\mu$ \ac{RVs} along with derived practical expression (corollary \ref{thm_kth_max_2}). Here, we have used $\lambda_1=0.72, \lambda_2=1.28, \lambda_3=2$ as non centrality parameters with \ac{dof}=2 for \ac{NCCS} \ac{RVs}. Accordingly \ac{i.n.i.d.} $\kappa$ values are $\kappa_1=0.36, \kappa_2=0.64, \kappa_3=1$ with \ac{i.i.d.} $\mu=1$ and $\bar{\gamma}=2$ dB. The derived expression is in good agreement with simulations.
    \subsubsection{$\kappa$-\ac{i.i.d.} and $\mu$-\ac{i.n.i.d.}$(\kappa_i=\kappa,\mu_j)$}    
    
    The Nakagami-m distribution can be achieved by setting $\kappa$ to be \ac{i.i.d.} with $\kappa \rightarrow 0$, and $\mu$ to be \ac{i.n.i.d.}. Fig. \ref{fig:naka} illustrates the simulated and theoretical CDFs of maximum order statistics for the specific case of Nakagami-m distribution with $\mu_1=0.8$, $\mu_2=1.5$, and $\mu_3=2$.
    \subsection{Outage Capacity and Average throughput results}
     Here, we present the results for the outage capacity and average throughput expressions derived in (\ref{c_out}), and (\ref{er_2}) for \ac{i.n.i.d.} $\kappa-\mu$ \ac{RVs}. Fig. \ref{fig:outage} presents the results of outage capacity for different values of $k$. 
      \begin{figure}[h]
        \centering
        \includegraphics[scale=0.5]{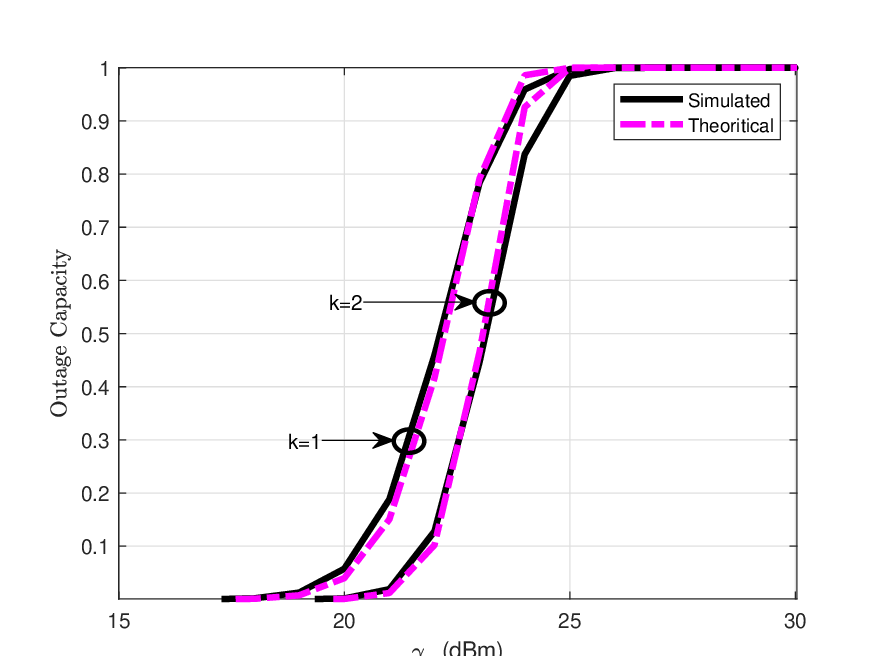}
        \caption{ Outage Capacity with $\gamma_{th}$=0 dB for $\kappa-\mu$ \ac{RVs} for different $ k $ }
        \label{fig:outage}
    \end{figure}
      Next, Fig. \ref{fig:avg_thr_vs_N} presents simulated and theoretical average throughput results of $k$-th maximum order statistics for \ac{i.n.i.d.} $\kappa-\mu$ \ac{RVs} for different $N$ and $k$. Here. we use $\kappa_1=0.8$, $\kappa_2=1.2$, $\kappa_3=2$ and $\mu_1=0.5$, $\mu_2=1$, $\mu_3=1.5$  with $\bar{\gamma}=2$ dB. The results are as expected, and with increasing $N$ the gap between the simulated and theoretical results is reduced. 
    \begin{figure}[h]
        \centering
        \includegraphics[scale=0.5]{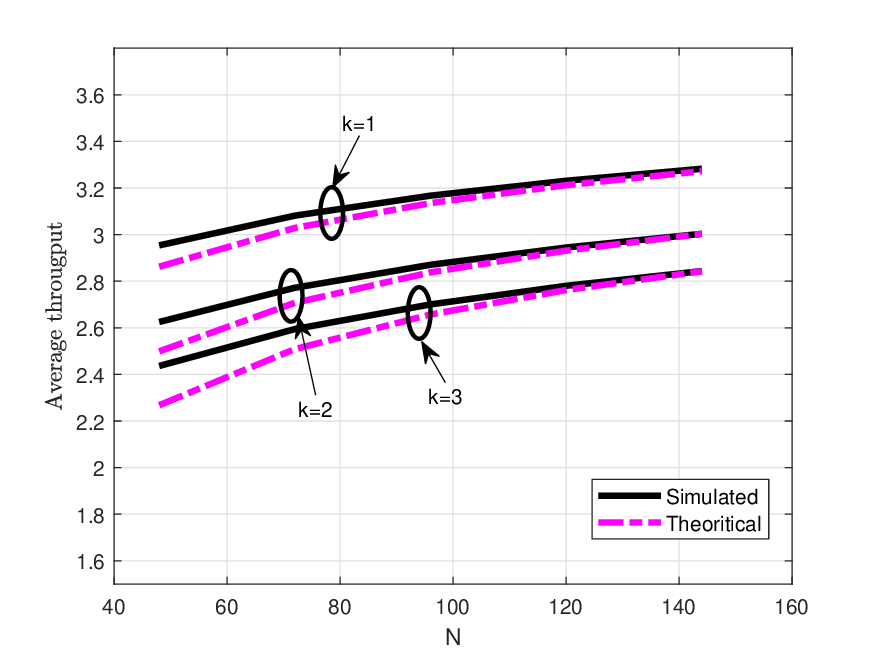}
        \caption{Average throughput for different $k$}
        \label{fig:avg_thr_vs_N}
    \end{figure} 
    \subsection{Gamma applications}
    \subsubsection{Backscatter}
     \begin{figure}[h]
        \centering
        \includegraphics[scale=0.5]{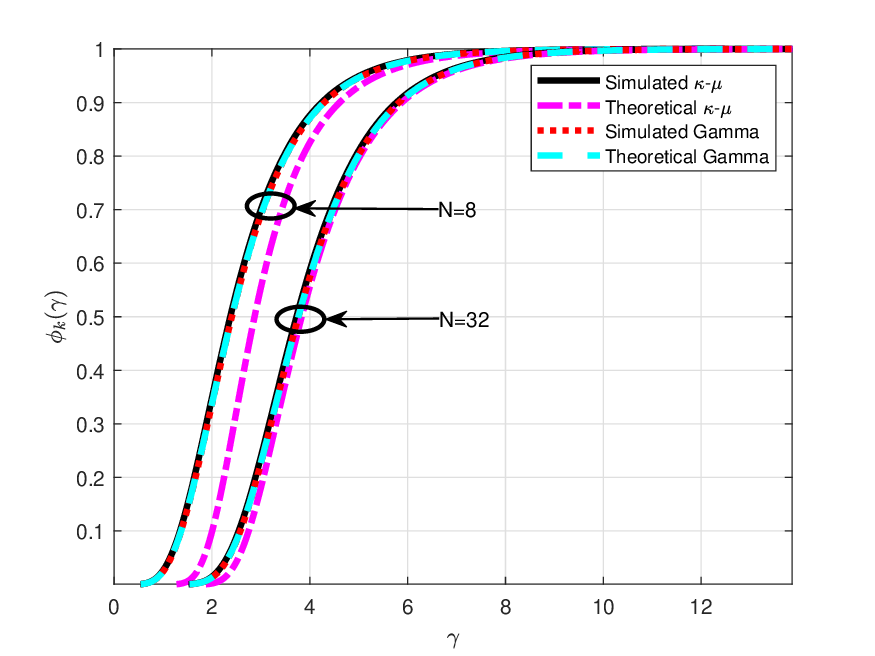}
        \caption{Comparison of $\kappa$–$\mu$ and Gamma Maximum Order Statistics with Application to AmBS \cite{ali2024ergodic}}
        \label{fig:bs}
    \end{figure} 
      Fig. \ref{fig:bs} presents the results for the maximum order statistics of Gamma RVs in the AmBS system considered in \cite{ali2024ergodic}. 
    As illustrated in Fig. \ref{fig:bs}, the simulated results confirm that the maximum order statistics of Gamma RVs coincide with those of the corresponding $\kappa-\mu$ RVs. For the simulation, the parameters $\mu_1=0.8$, $\mu_2=1.2$, and $\mu_3=2$ were used.

    
    \subsubsection{Multi-RIS }
   
     \begin{figure}[h]
        \centering
        \includegraphics[scale=0.5]{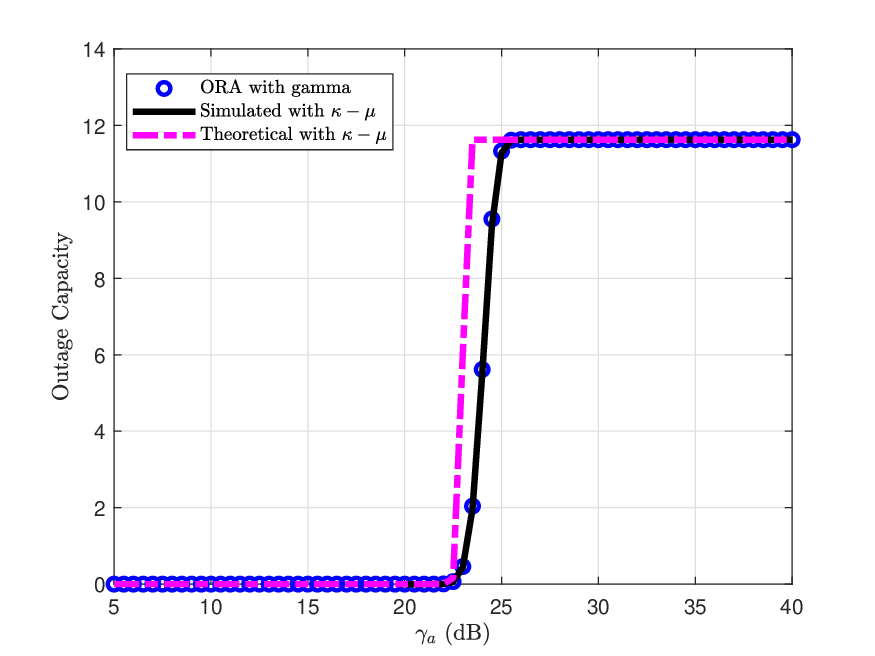}
        \caption{Comparison of $\kappa$–$\mu$ and Gamma Maximum Order Statistics with Application to Multi-RIS \cite{do2021multi}}
        \label{fig:ORA}
    \end{figure} 
     Fig. \ref{fig:ORA} illustrates the outage capacity performance of the ORA scheme presented in \cite{do2021multi} together with our simulation results. The authors in \cite{do2021multi}derived an approximate closed-form expression for the ergodic capacity of the ORA scheme, considering $V_n$ as gamma random variables. We compared their results with our findings, where the gamma distribution is treated as a special case of the $\kappa-\mu$ distribution, as illustrated in Fig. \ref{fig:ORA}. Once again, the simulations confirm that our derived expressions closely match the results reported in \cite{do2021multi}.
    
    \section{{Conclusions and Future works}}\label{conclusion}
    This paper derived the asymptotic distribution of the $k$-th maximum order statistics of i.n.i.d. \ac{SNR} \ac{RVs} in $\kappa-\mu$ fading using EVT. The presented results are applicable in various scenarios where determining order statistics in a $\kappa-\mu$ fading environment is necessary. Furthermore, we also derived performance metrics, such as outage probability and average throughput. The practical value of our findings was shown across diverse applications, including  MIMO systems, relay networks, body-to-body communications, UAV-supported relays, backscatter systems, and multi-RISs. 
    Moving forward, we aim to expand our investigation of \ac{i.n.i.d.} order statistics to encompass more intricate fading scenarios, such as $\kappa-\mu$ shadow fading.

    \begin{appendices}
    \section{Proof for Theorem \ref{thm_kth_max}} \label{proof_main}
    Using CDF expression in  (\ref{eq1}) and applying in (\ref{ua2}),
    \begin{equation}\label{eq3}
        \tilde{u}(\gamma)=\lim _{N\rightarrow \infty} \sum_{n=1}^N Q_{\mu_n}\left(\sqrt{2\kappa_n\mu_n}, \sqrt{\frac{2(1+\kappa_n)\mu_n (a_N\gamma+b_N)}{\bar{\gamma}_n}}\right).
    \end{equation}
    Let $\omega_n=\sqrt{2\kappa_n\mu_n}$ and $\zeta_n=\sqrt{\frac{2(1+\kappa_n)\mu_n (a_N\gamma+b_N)}{\bar{\gamma}_n}}$. 
    In addition, based on Mejzler's theorem \cite[Chapter 5]{de2006extreme}, which specifies the normalizing constants $a_N$ and $b_N$,  we consider the asymptotic condition $b_N \to \infty$ as $N \to \infty$. Therefore, the second argument of the Marcum-$Q$ function also becomes unbounded with increasing $N$. In this limit, the Marcum-$Q$ function $\mathcal{Q}_{\mu_n}(\omega_{n}, \zeta_{n})$ can be represented using the $Q$ function \cite[(A 27)]{nuttall} for $\zeta_{n} \to \infty$,    
    \begin{align}\label{marc}
         Q_{\mu_n}(\omega_{n}, \zeta_{n})& =\left(\frac{\zeta_{n}}{\omega_{n}}\right)^{\mu_n-\frac{1}{2}} Q(\zeta_{n}\!-\!\omega_{n}).
    \end{align}
   Next, by expressing the $Q(\cdot)$ function in terms of the complementary error function $\operatorname{erfc}(\cdot)$ \cite[Eq. 12.26]{molisch2012wireless}, 
    $ \  Q(\zeta_{n}\!-\!\omega_{n}) = \frac{1}{2} \operatorname{erfc}\left(\frac{\zeta_{n}\!-\!\omega_{n}}{\sqrt{2}}\right)\,$, 
    and utilizing the asymptotic expansion of $\operatorname{erfc}(\cdot)$ given in \cite{erfc}
\begin{equation}
\operatorname{erfc}(z)=\frac{1}{\sqrt{\pi}z}\exp(-z^{2})
\left(1+\mathcal{O}\left(\frac{1}{z^{2}}\right)\right),
\qquad |z|\to\infty,
\end{equation}
the asymptotic expression for the  $Q$ function can be written as
\begin{equation}
Q(z)= \frac{1}{\sqrt{2\pi}z}\exp\left(-\frac{z^{2}}{2}\right).
\end{equation}


Substituting $Q(z)$ in (\ref{marc})
    \begin{align}\label{marc2}
         Q_{\mu_n}(\omega_{n}, \zeta_{n})&={\color{orange}\left(\frac{\zeta_{n}}{\omega_{n}}\right)^{\mu_n-\frac{1}{2}}}{\color{blue}\frac{1}{\sqrt{2\pi }\left ( \zeta_{n}\!-\!\omega_{n} \right )}}{\color{red}\exp{\left(-\frac{(\zeta_{n}\!-\!\omega_{n})^2}{2}\right)}}.
    \end{align}
    Further, substituting $\omega_n$, $\zeta_n$ and using the fact that $a_{N} \gamma+b_{N}>> 2\kappa_n \mu_n$  as $b_N \to \infty$ for all choices of $a_{N}$ and $b_{N}$, the terms in Eq. (\ref{marc2}) can be written as 
   \begin{align}\label{qfun}
        {\color{orange}\left(\frac{\zeta_{n}}{\omega_{n}}\right)^{\mu_n-\frac{1}{2}}}&=\left ( \frac{\sqrt{2\left ( 1+\kappa_n \right )\mu_{n}(a_N \gamma+b_N)}}{\sqrt{\bar{\gamma}_n}\sqrt{2\kappa_n \mu_n}} \right )^{\mu_{n}-\frac{1}{2}},
    \end{align} 
    \begin{align}\label{qfun}
       {\color{blue}\frac{1}{\sqrt{2\pi }\left ( \zeta_{n}-\omega_{n} \right )}}&= \frac{1}{\sqrt{2\pi}\left ( \frac{\sqrt{2\left ( 1+\kappa_n \right )\mu_{n}(a_N \gamma+b_N)}}{\sqrt{\bar{\gamma}_n}}-\sqrt{2\kappa_n \mu_n} \right )},
    \end{align} 
    and
    \begin{align}\label{qfun}
        &{\color{orange}\left(\frac{\zeta_{n}}{\omega_{n}}\right)^{\mu_n-\frac{1}{2}}}{\color{blue}\frac{1}{\sqrt{2\pi }\left ( \zeta_{n}\!-\!\omega_{n} \right )}}\nonumber \\&=\frac{\sqrt{1+\kappa_n}^{\mu_n-\frac{3}{2}}}{\sqrt{\kappa_n}^{\mu_n-\frac{1}{2}}}\frac{1}{2\sqrt{\pi \mu_n}} \sqrt{\frac{a_N \gamma+ b_N}{\bar{\gamma}_n}}^{\mu_n-\frac{3}{2}} .
    \end{align} 
    Also, using the fact that $\left(a_{N} \gamma+b_{N}\right)\approx b_{N}$ as $b_{N}>>a_{N} $ we obtain
    \begin{align}\label{qfun2}        
         &{\color{red}\exp \left(-\frac{\left(\zeta_{n}\!-\!\omega_{n}\right)^2}{2}\right)} =\exp \left(-\frac{\zeta_{n}^2}{2}\right)  \exp \left(-\frac{\omega_{n}^2}{2}\right)  \exp \left(\omega_{n} \zeta_{n}\right)\nonumber \\
         &=\exp \left(-\frac{2(1+\kappa_n)\mu_n\left(a_{N} \gamma+b_{N}\right)}{2\bar{\gamma}_n }\right) \exp \left(-\frac{2\kappa_{n}\mu_n}{2 }\right) \nonumber \\& \quad \quad\exp \left(\sqrt{\frac{2(1+\kappa_n)\mu_n}{\bar{\gamma}_n}}  \sqrt{a_{N} \gamma+b_{N}} \sqrt{2\kappa_n\mu_n}\right)\nonumber \\
         &\approx\exp \left(-\frac{(1+\kappa_n)\mu_n}{\bar{\gamma}_n}\left(a_{N} \gamma+b_{N}\right)\right) \exp \left(-\kappa_{n}\mu_n\right) \nonumber \\ &\quad \quad\exp \left(\sqrt{\frac{2(1+\kappa_n)\mu_n}{\bar{\gamma}_n}}  \sqrt{b_{N}} \sqrt{2\kappa_n\mu_n}\right).       
    \end{align}        
      Now, substituting (\ref{marc2}) in (\ref{eq3}) using (\ref{qfun}), (\ref{qfun2}), we can rewrite (\ref{eq3}) as 
    \begin{align}\label{utilgam}
        \tilde{u}(\gamma)=&\lim_{N \rightarrow \infty} \sum_{n=1}^N \frac{\sqrt{1+\kappa_n}^{\mu_n-\frac{3}{2}}}{\sqrt{\kappa_n}^{\mu_n-\frac{1}{2}}}\frac{1}{2\sqrt{\pi \mu_n}} \sqrt{\frac{a_N \gamma+ b_N}{\bar{\gamma}_n}}^{\mu_n-\frac{3}{2}}\nonumber \\ &\exp \left(-\frac{(1+\kappa_n)\mu_n}{\bar{\gamma}_n}\left(a_{N} \gamma+b_{N}\right)\right)  \exp \left(-\kappa_{n}\mu_n\right)\nonumber \\ & \quad \quad  \exp \left(\sqrt{\frac{2(1+\kappa_n)\mu_n}{\bar{\gamma}_n}}  \sqrt{b_{N}} \sqrt{2\kappa_n\mu_n}\right).
    \end{align}        
    Similar to \cite{peppas2011sum}, we assume average SNR $\bar{\gamma}_n=\bar{\gamma}$ for all $n$.   Recall  $N_{ij}$ represents the number of times  pair $\left(\kappa_i, \mu_j\right)$ occurs among  $N$ values i.e.
    $$N_{ij}=\sum_{n=1}^N \mathbb{I}_{\kappa_{n}=\kappa_i, \mu_{n}=\mu_j} \quad  1 \leq i \leq P , 1 \leq j \leq L.$$
    The single summation in  (\ref{utilgam}), involving the parameter pairs $(\kappa_n,\mu_n)$, can be reformulated as a double summation encompassing all possible combinations of the parameters $(\kappa_i,\mu_j)$
    \begin{align}
        & \tilde{u}(\gamma)=\lim _{N \rightarrow \infty} \sum_{i=1}^{P } \sum_{j=1}^{L } N_{ij} \frac{\sqrt{1+\kappa_i}^{\mu_j-\frac{3}{2}}}{\sqrt{\kappa_i}^{\mu_j-\frac{1}{2}}}\frac{1}{2\sqrt{\pi \mu_j}}  \exp \left(-\kappa_{i}\mu_j\right) \nonumber \\ & \exp\left ( \log \left ( \left ( \frac{a_N \gamma+b_N}{\bar{\gamma}}\right )^{\frac{2\mu_{j} -3}{4}} \right ) \right ) 
        \exp \left(-\frac{(1+\kappa_i)\mu_j}{\bar{\gamma}}\left(a_{N} \gamma \right)\right)\nonumber\\& \exp \left(-\frac{(1+\kappa_i)\mu_j}{\bar{\gamma}}\left(b_{N}\right)\right)  \exp \left(\sqrt{\frac{2(1+\kappa_i)\mu_j}{\bar{\gamma}}}  \sqrt{b_{N}} \sqrt{2\kappa_i\mu_j}\right).
     \end{align}
    Suppose, we choose the normalizing constants $a_{N}$, $b_{N}$ in the following form        
    \begin{align}\label{andb}
        a_{N} &=\frac{\bar{\gamma}}{(1+\tilde{\kappa})\tilde{\mu}} \nonumber \\
        b_{N} &=a_N c_N\nonumber \\ &=a_N \left(\log (\tilde{N})-c_0  \log (\log (\tilde{N}))+c_1 \sqrt{\log (\tilde{N})}-c_2\right),
    \end{align}
    
    with constants $ c_0, c_1$, and $c_2$.
    Let  $\tilde{\kappa}$, $\tilde{\mu}$ be the smallest among $\left\{\kappa_{1},\cdots,\kappa_{P}\right\}$ and $\left\{\mu_{1},\cdots,\mu_{L}\right\}$ respectively. Recall from Theorem \ref{thm_kth_max} statement, $\tilde{N}$ represents the number of times the pair $(\kappa_n=\tilde{\kappa},\mu_n=\tilde{\mu})$ occurs among $N$, note we assume $\tilde{N}\to \infty $ as $N \to \infty $.  For the above choice of normalizing constants, let us 
    evaluate $ \tilde{u}(\gamma)= \sum_{(i,j) \rightarrow (1,1)}^{(P \times L) }\tilde{u}(\gamma)_{ij}$, where $\tilde{u}(\gamma)_{ij}$ corresponds to the pair ($\kappa_i,\mu_j$).
    \begin{align}
        &\tilde{u}(\gamma)_{ij}=\lim _{N \rightarrow \infty}  N_{ij} \frac{\sqrt{1+\kappa_i}^{\mu_j-\frac{3}{2}}}{\sqrt{\kappa_i}^{\mu_j-\frac{1}{2}}}\frac{1}{2\sqrt{\pi \mu_j}}  \exp \left(-\kappa_{i}\mu_j\right)\nonumber \\& \exp\left ( \frac{2\mu_{j} -3}{4}\log \left ( \frac{a_N \gamma+b_N}{\bar{\gamma}} \right ) \right )
        \exp \left(-\frac{(1+\kappa_i)\mu_j}{\bar{\gamma}}\left(a_{N} \gamma \right)\right) \nonumber \\&\exp \left(-\frac{(1+\kappa_i)\mu_j}{\bar{\gamma}}\left(b_{N}\right)\right)  \exp \left(\sqrt{\frac{2(1+\kappa_i)\mu_j}{\bar{\gamma}}}  \sqrt{b_{N}} \sqrt{2\kappa_i\mu_j}\right).
    \end{align}
    
    Again, since $b_{N}>>a_{N}$, we approximate $a_{N} \gamma+b_{N}$ as $b_{N}$. Subsequently, substituting the expression for $a_{N}$ from (\ref{andb}), we obtain
    \begin{align}
        &\tilde{u}(\gamma)_{ij}=\lim _{N \rightarrow \infty}  N_{ij} \frac{\sqrt{1+\kappa_i}^{\mu_j-\frac{3}{2}}}{\sqrt{\kappa_i}^{\mu_j-\frac{1}{2}}}\frac{1}{2\sqrt{\pi \mu_j}}  \exp \left(-\kappa_{i}\mu_j\right)\nonumber \\& \exp\left ( \frac{2\mu_{j} -3}{4}\log \left ( \frac{b_N}{\bar{\gamma}} \right ) \right ) \exp\left (-\frac{\left ( 1+\kappa_i \right )\mu_j}{\left ( 1+\tilde{\kappa} \right )\tilde{\mu}}\gamma\right )\nonumber \\&\exp\left (-\frac{\left ( 1+\kappa_i \right )\mu_j}{\left ( 1+\tilde{\kappa} \right )\tilde{\mu}} c_N\right )\exp \left(\sqrt{\frac{2(1+\kappa_i)\mu_j}{\bar{\gamma}}}  \sqrt{b_{N}} \sqrt{2\kappa_i\mu_j}\right).
    \end{align}
    Let $\theta_{ij}=\frac{\left ( 1+\kappa_i \right )\mu_j}{\left ( 1+\tilde{\kappa} \right )\tilde{\mu}}$, by definitions of $\tilde{\kappa}$ and $\tilde{\mu}$, $\theta_{ij}\ge 1$. Then 
    \begin{align}
        &\tilde{u}(\gamma)_{ij}=\lim _{N \rightarrow \infty}  N_{ij} \frac{\sqrt{1+\kappa_i}^{\mu_j-\frac{3}{2}}}{\sqrt{\kappa_i}^{\mu_j-\frac{1}{2}}}\frac{1}{2\sqrt{\pi \mu_j}}  \exp \left(-\kappa_{i}\mu_j\right)\nonumber \\& \exp\left ( \frac{2\mu_{j} -3}{4}\log \left ( \frac{b_N}{\bar{\gamma}} \right ) \right ) \exp\left (-\theta_{ij}\gamma\right )\exp\left (-\theta_{ij} c_N\right )\nonumber \\&\exp \left(  2 \sqrt{\theta_{ij}\kappa_i\mu_j}\sqrt{c_{N}}\right).
    \end{align}   
   
    Substituting constant $c_{N}$ in $\exp\left (-\theta_{ij} c_N\right )$,
     \begin{align}\label{ku}
        &\tilde{u}(\gamma)_{ij}=\lim _{N \rightarrow \infty}  N_{ij} \frac{\sqrt{1+\kappa_i}^{\mu_j-\frac{3}{2}}}{\sqrt{\kappa_i}^{\mu_j-\frac{1}{2}}}\frac{1}{2\sqrt{\pi \mu_j}}  \exp \left(-\kappa_{i}\mu_j\right)\nonumber \\&\exp\left (-\theta_{ij}\gamma\right )\exp \left(  2 \sqrt{\theta_{ij}\kappa_i\mu_j}\sqrt{c_{N}}\right) \exp\left ( \frac{2\mu_{j} -3}{4}\log \left ( \frac{b_N}{\bar{\gamma}} \right ) \right )\nonumber \\& \exp\left (-\theta_{ij} \left(\log (\tilde{N})-c_0  \log (\log (\tilde{N}))+c_1 \sqrt{\log (\tilde{N})}-c_2\right)\right ).
    \end{align}
    \begin{figure*}[t]    
     \begin{align}\label{eq39}
        \tilde{u}(\gamma)_{ij}=&\lim _{N \rightarrow \infty} \exp\left (-\theta_{ij}\gamma\right )  N_{ij} \frac{\sqrt{1+\kappa_i}^{\mu_j-\frac{3}{2}}}{\sqrt{\kappa_i}^{\mu_j-\frac{1}{2}}}\frac{1}{2\sqrt{\pi \mu_j}}  \exp \left(-\kappa_{i}\mu_j\right)\nonumber  \exp\left [ \log  \left (\frac{b_N}{\bar{\gamma}}\right )^{\frac{2\mu_{j} -3}{4}}+ \log  \left (\log (\tilde{N})\right )^{\theta_{ij}c_0}  \right ]  \exp\left [  \log  \left (\frac{1}{\tilde{N}^{\theta_{ij}}}\right )  \right ]\\ &  \exp\left (- \theta_{ij}c_1 \sqrt{\log \left ( \tilde{N} \right )} \right ) \exp \left ( \theta_{ij}c_2 \right )
        \exp \left(  2 \sqrt{\theta_{ij}\kappa_i\mu_j}\sqrt{c_{N}}\right).
    \end{align}
    \end{figure*}
    Rearranging the terms in (\ref{ku}) yields (\ref{eq39}). Further simplification of (\ref{eq39}) leads to (\ref{mn}). Note that, to prove Theorem \ref{thm_kth_max}, we need to show (\ref{mn}) evaluates to a finite value for all $\tilde{u}(\gamma)_{ij}$. It is apparent that all terms other than Term 1,2, and 3 in (\ref{mn}) are finite for all $\tilde{u}(\gamma)_{ij}$ for all $N$. Hence, to prove Theorem \ref{thm_kth_max}, we need to show Terms 1,2, and 3 are finite as $N\rightarrow \infty$.
     \begin{figure*}[t]
     \begin{align}\label{mn}
        \tilde{u}(\gamma)_{ij}=&\lim _{N \rightarrow \infty} \exp\left (-\theta_{ij}\gamma\right )  \frac{\sqrt{1+\kappa_i}^{\mu_j-\frac{3}{2}}}{\sqrt{\kappa_i}^{\mu_j-\frac{1}{2}}}\frac{1}{2\sqrt{\pi \mu_j}}  \exp \left(-\kappa_{i}\mu_j\right) \exp \left ( \theta_{ij}c_2 \right )  {\color{orange}\underbrace{\frac{N_{ij}}{(\tilde{N})^{\theta_{ij}}}}_{\text{Term-1}}}  {\color{blue}\underbrace{\exp\left [ \log  \left (\frac{b_N}{\bar{\gamma}}\right )^{\frac{2\mu_{j} -3}{4}}+ \log  \left (\log (\tilde{N})\right )^{\theta_{ij}c_0}  \right ] }_{\text{Term-2}}} \nonumber \\& {\color{red}\underbrace{\exp\left (- \theta_{ij}c_1 \sqrt{\log \left ( \tilde{N} \right )}+ 2 \sqrt{\theta_{ij}\kappa_i\mu_j}\sqrt{c_{N}} \right )}_{\text{Term-3}}} .        
    \end{align} 
    \hrulefill
    \end{figure*}
    Similar to \cite{subhash2022asymptotic}, and from Theorem \ref{thm_kth_max},  $\tilde{N}\to \infty $ as $N \to \infty $. 
    We simplify Terms 1, 2, and 3 in  (\ref{mn}).  Recall from Theorem \ref{thm_kth_max} statement, $\tilde{\theta}=\min \left \{ \theta_{ij} \right \}_{(i,j)=(1,1)}^{(P, L)}$, and note that $\theta_{ij}$ can only take values greater than or equal to one.  Note that in Term-1, $\lim _{N \rightarrow \infty} \frac{N_{ij}}{(\tilde{N})^{\theta_{ij}}} \rightarrow 1$ when $\theta_{ij}=1$ (since $N_{ij}=\tilde{N}$ whenever $\theta_{ij}=1$)  and $\lim _{N \rightarrow \infty} \frac{N_{ij}}{(\tilde{N})^{\theta_{ij}}} \rightarrow 0$ when $\theta_{ij}\neq 1$.\\
    Let $c_{0}=\frac{-(2\tilde{\mu}-3)}{4}$ and simplify the Term-2. Whenever $\mu_j=\tilde{\mu}$, Term-2  can be written as 
    \begin{align}\label{ne}
        &\exp\left [ \log  \left (\frac{b_N}{\bar{\gamma}}\right )^{\frac{2\mu_{j} -3}{4}}+ \log  \left (\log (\tilde{N})\right )^{\theta_{ij}c_0}  \right ]\nonumber \\ &=\exp \left ( \frac{2 \tilde{\mu} -3}{4} \log \left ( \frac{b_N}{\bar{\gamma}(\log (\tilde{N}))^{\theta_{ij} }} \right ) \right ).
    \end{align}
    substitute $b_{N}$ in Eq. (\ref{ne}), we get Eq. (\ref{eq:s}).    
        \begin{equation}\label{eq:s}
             \lim _{N \rightarrow \infty} \exp \left[{\frac{2\tilde{\mu}-3}{4}}\log \left(\frac{a_N}{\bar{\gamma}}S\right)\right],
        \end{equation}
     where $S=\frac{\log (\tilde{N})- c_0\log (\log (\tilde{N}))+ c_{1} \sqrt{\log (\tilde{N})}- c_2}{(\log (\tilde{N}))^{\theta_{ij} }}$.
    The limit in (\ref{eq:s}) needs to be evaluated for different values of $\theta_{ij}$.  When $(\tilde{\kappa}=\kappa_i, \tilde{\mu}=\mu_j)$, then $\theta_{ij}=1$ and $\lim_{N\rightarrow \infty } S\rightarrow 1$. Hence, $\lim_{N\rightarrow \infty }\exp \left ( \log \left ( \frac{a_N}{\bar{\gamma}} S \right )^{\frac{2\tilde{\mu}-3}{4}} \right )\rightarrow (\frac{a_N}{\bar{\gamma}})^{\frac{2\tilde{\mu}-3}{4}}$, which is finite. For all $(\tilde{\kappa}\neq\kappa_i, \tilde{\mu}\neq\mu_j)$, then $\theta_{ij}>1$ and $\lim_{N\rightarrow \infty } S\rightarrow 0$.
    \par Term-3 can be written as 
    \begin{equation*}
        \exp\left [ \theta_{ij}\left ( 2\sqrt{\frac{\kappa_i\mu_j}{\theta_{ij}}}\sqrt{c_N}-c_1\sqrt{\log\left ( \tilde{N} \right )} \right ) \right ].
    \end{equation*}
    Let $c_1=2 \sqrt{\frac{\tilde{\kappa}\tilde{\mu}}{\tilde{\theta}}}$ and simplifying Term-3 gives Eq. (\ref{ts}).
    \begin{figure*} [b]
    \hrulefill
        \begin{multline}\label{ts}
             \lim _{N \rightarrow \infty} \exp\left [ \theta_{ij}\left ( 2\sqrt{\frac{\kappa_i\mu_j}{\theta_{ij}}}\sqrt{c_N}-2 \sqrt{\frac{\tilde{\kappa}\tilde{\mu}}{\tilde{\theta}}}\sqrt{\log\left ( \tilde{N} \right )} \right ) \right ]  \\= \lim _{N \rightarrow \infty} \exp \left[2\theta_{ij} \left(\sqrt{\frac{\kappa_i\mu_j}{\theta_{ij}}} \sqrt{\left ( \sqrt{\log (\tilde{N})}+\frac{c_1}{2} \right )^{2}-\frac{c_1^2}{4}-c_0 \log (\log(\tilde{N}))-c_{2}}-\sqrt{\frac{\tilde{\kappa}\tilde{\mu}}{\tilde{\theta}}} \sqrt{\log (\tilde{N})}\right)\right]
        \end{multline}
    \end{figure*}
    Let $x_1=\sqrt{\log (\tilde{N})}$ and $x_2=\frac{c_1}{2}$, then
    \begin{figure*}[b]   
        \begin{multline}\label{eq100}
            \lim _{N \rightarrow \infty} \exp \left[2\theta_{ij} \left(\sqrt{\frac{\kappa_i\mu_j}{\theta_{ij}}} \sqrt{(x_1+x_2)^2-x_2^2-c_0 \log(x_1^2)-c_2}-\sqrt{\frac{\tilde{\kappa}\tilde{\mu}}{\tilde{\theta}}} x_1\right)\right]\\
            = \lim _{N \rightarrow \infty} \exp \left[2\theta_{ij} \left(\sqrt{\frac{\kappa_i\mu_j}{\theta_{ij}}}x_1 \left(1+\frac{2 x_2}{x_1}-{\color{purple}\frac{c_0\log x_1^2}{x_1^2}-\frac{c_2}{x_1^2}}\right)^{\frac{1}{2}}-\sqrt{\frac{\tilde{\kappa}\tilde{\mu}}{\tilde{\theta}}} x_1\right)\right]
        \end{multline}
     \end{figure*}
    Note that $x_1 \rightarrow \infty$ as $N \rightarrow \infty $, hence ignoring the purple color terms in  (\ref{eq100}), we get 
   \begin{align}\label{eq45}
        \lim _{N \rightarrow \infty} \exp \left[2\theta_{ij} \left(\sqrt{\frac{\kappa_i\mu_j}{\theta_{ij}}}x_1 {\color{green}\left(1+\frac{2 x_2}{x_1}\right)^{\frac{1}{2}}}-\sqrt{\frac{\tilde{\kappa}\tilde{\mu}}{\tilde{\theta}}} x_1\right)\right].
    \end{align}
     Next, applying the binomial expansion to the green-colored term, ${\color{green}\left(1+\frac{2 x_2}{x_1}\right)^{\frac{1}{2}}}=\left(1+\frac{1}{2} \cdot \frac{2 x_2}{x_1}+\cdots\right)=1+\frac{x_2}{x_1}$, and using the fact that $x_1 \rightarrow \infty$ as $N \rightarrow \infty $, the higher-order terms become negligible. Therefore, only the first two terms of the expansion are retained. Substituting this in (\ref{eq45}), we get
   \begin{multline}
      \lim _{N \rightarrow \infty} \exp \left[2\theta_{ij} \left(\sqrt{\frac{\kappa_i\mu_j}{\theta_{ij}}} (x_1+x_2)-\sqrt{\frac{\tilde{\kappa}\tilde{\mu}}{\tilde{\theta}}} x_1\right)\right].
    \end{multline}    
    After substituting $x_1$ and $x_2$ values, Term-3 simplifies to 
    \begin{equation*}
        \exp \left[2\theta_{ij} \left(\sqrt{\frac{\kappa_i\mu_j}{\theta_{ij}}} -\sqrt{\frac{\tilde{\kappa}\tilde{\mu}}{\tilde{\theta}}} \right)\sqrt{\log (\tilde{N})}+2\theta_{ij}\sqrt{\frac{\kappa_i\mu_j}{\theta_{ij}}}\sqrt{\frac{\tilde{\kappa}\tilde{\mu}}{\tilde{\theta}}}\right].
    \end{equation*}
    Finally, substituting the Term-2 and 3 in (\ref{mn}), we have  
    \begin{align}\label{mn_2}
       & \tilde{u}(\gamma)_{ij}=\lim _{N \rightarrow \infty} \exp\left (-\theta_{ij}\gamma\right )  \frac{\sqrt{1+\kappa_i}^{\mu_j-\frac{3}{2}}}{\sqrt{\kappa_i}^{\mu_j-\frac{1}{2}}}\frac{1}{2\sqrt{\pi \mu_j}}  \exp \left(-\kappa_{i}\mu_j\right)\nonumber \\& \exp \left ( \theta_{ij}c_2 \right )  {\color{orange}\underbrace{\frac{N_{ij}}{(\tilde{N})^{\theta_{ij}}}}_{\text{Term-1}}}  {\color{blue}\underbrace{\exp \left ( \log \left ( \frac{a_N}{\bar{\gamma}} S \right )^{\frac{2\tilde{\mu}-3}{4}} \right ) }_{\text{Term-2}}}  \nonumber \\&{\color{red}\underbrace{\exp \left[2\theta_{ij} \left(\sqrt{\frac{\kappa_i\mu_j}{\theta_{ij}}} -\sqrt{\frac{\tilde{\kappa}\tilde{\mu}}{\tilde{\theta}}} \right)\sqrt{\log (\tilde{N})}+2\theta_{ij}\sqrt{\frac{\kappa_i\mu_j}{\theta_{ij}}}\sqrt{\frac{\tilde{\kappa}\tilde{\mu}}{\tilde{\theta}}}\right]}_{\text{Term-3}}} .        
    \end{align}
     After rearranging (\ref{mn_2}), we arrive at Eq. (\ref{eq18}).
    \begin{figure*} [b] 
     \hrulefill
        \begin{multline}\label{eq18}
            \tilde{u}(\gamma)_{ij}=\lim _{N \rightarrow \infty} \exp\left (-\theta_{ij}\gamma\right ){\color{orange}\frac{N_{ij}}{(\tilde{N})^{\theta_{ij}}}}{\color{red}\exp \left[2\theta_{ij} \left(\sqrt{\frac{\kappa_i\mu_j}{\theta_{ij}}} -\sqrt{\frac{\tilde{\kappa}\tilde{\mu}}{\tilde{\theta}}} \right)\sqrt{\log (\tilde{N})}\right]}\\\underbrace{
            {\color{red}\exp \left[2\theta_{ij}\sqrt{\frac{\kappa_i\mu_j}{\theta_{ij}}}\sqrt{\frac{\tilde{\kappa}\tilde{\mu}}{\tilde{\theta}}}\right]}\frac{\sqrt{1+\kappa_i}^{\mu_j-\frac{3}{2}}}{\sqrt{\kappa_i}^{\mu_j-\frac{1}{2}}}\frac{1}{2\sqrt{\pi \mu_j}}  \exp \left(-\kappa_{i}\mu_j\right){\color{blue}\exp \left ( \log \left ( \frac{a_N}{\bar{\gamma}} S \right )^{\frac{2\tilde{\mu}-3}{4}} \right ) } \exp \left ( \theta_{ij}c_2 \right )}_{\text{Term-4}}.
        \end{multline}
    \end{figure*}
    Furthermore, Eq.~(\ref{eq18}) can be reduced to Eq.~(\ref{eqST}) by simplifying Term-4 in Eq.~(\ref{eq18}), where multiple exponential terms are combined into a single exponential expression.
    \begin{figure*}    
        \begin{multline}\label{eqST}
            \tilde{u}(\gamma)_{ij}=\lim _{N \rightarrow \infty} \exp\left (-\theta_{ij}\gamma\right ){\color{orange}\frac{N_{ij}}{(\tilde{N})^{\theta_{ij}}}}{\color{red}\exp \left[2\theta_{ij} \left(\sqrt{\frac{\kappa_i\mu_j}{\theta_{ij}}} -\sqrt{\frac{\tilde{\kappa}\tilde{\mu}}{\tilde{\theta}}} \right)\sqrt{\log (\tilde{N})}\right]}\\\underbrace{
            \exp \left[2\theta_{ij}\sqrt{\frac{\kappa_i\mu_j}{\theta_{ij}}}\sqrt{\frac{\tilde{\kappa}\tilde{\mu}}{\tilde{\theta}}}+\log\left(\frac{\sqrt{1+\kappa_i}^{\mu_j-\frac{3}{2}}}{\sqrt{\kappa_i}^{\mu_j-\frac{1}{2}}}\frac{1}{2\sqrt{\pi \mu_j}}\right)+ \left(-\kappa_{i}\mu_j\right)+ \left ( \log \left ( \frac{a_N}{\bar{\gamma}} S \right )^{\frac{2\tilde{\mu}-3}{4}} \right ) +  \theta_{ij}c_2\right]}_{\text{Term-4}}.
        \end{multline}  
    \end{figure*}
    Considering Term-4, choose constant $c_2=\frac{1}{\tilde{\theta}}\left[ \log\left( \frac{\sqrt{\tilde{\kappa}}^{\tilde{\mu}-\frac{1}{2}}}{\sqrt{1+\tilde{\kappa}}^{\tilde{\mu}-\frac{3}{2}}}\frac{2\sqrt{\pi\tilde{\mu}}}{\left(\frac{a_N}{\bar{\gamma}}\right)^{\frac{2\tilde{\mu}-3}{4}}} \right)-\tilde{\kappa}\tilde{\mu} \right]$ .     
    Final expression for single $(i,j)$ with all substitutions 
    \begin{align}\label{ONEIJ}
        &\tilde{u}(\gamma)_{ij}=\lim _{N \rightarrow \infty} \exp\left (-\theta_{ij}\gamma\right ){\color{orange}\frac{N_{ij}}{(\tilde{N})^{\theta_{ij}}}}\nonumber \\ &{\color{red}\exp \left[2\theta_{ij} \left(\sqrt{\frac{\kappa_i\mu_j}{\theta_{ij}}} -\sqrt{\frac{\tilde{\kappa}\tilde{\mu}}{\tilde{\theta}}} \right)\sqrt{\log (\tilde{N})}\right]}\nonumber \\ &
        \exp \left[2\theta_{ij}\sqrt{\frac{\kappa_i\mu_j}{\theta_{ij}}}\sqrt{\frac{\tilde{\kappa}\tilde{\mu}}{\tilde{\theta}}}+\log\left(\frac{\sqrt{1+\kappa_i}^{\mu_j-\frac{3}{2}}}{\sqrt{\kappa_i}^{\mu_j-\frac{1}{2}}}\frac{1}{2\sqrt{\pi \mu_j}}\right)\right] \nonumber \\ &\exp \left[\left(-\kappa_{i}\mu_j\right)+ \left ( \log \left ( \frac{a_N}{\bar{\gamma}} S \right )^{\frac{2\tilde{\mu}-3}{4}} \right ) \right]\nonumber \\ &
        \exp \left( \theta_{ij} \frac{1}{\tilde{\theta}}\left[ \log\left( \frac{\sqrt{\tilde{\kappa}}^{\tilde{\mu}-\frac{1}{2}}}{\sqrt{1+\tilde{\kappa}}^{\tilde{\mu}-\frac{3}{2}}}\frac{2\sqrt{\pi\tilde{\mu}}}{\left(\frac{a_N}{\bar{\gamma}}\right)^{\frac{2\tilde{\mu}-3}{4}}} \right)-\tilde{\kappa}\tilde{\mu} \right] \right).
    \end{align}    
    
    Finally, for all $(i,j)$
    \begin{align}\label{eq:final}
        &\tilde{u}(\gamma)=\lim _{N \rightarrow \infty} \sum_{(i,j) \rightarrow (1,1)}^{(P \times L) }\exp\left (-\theta_{ij}\gamma\right ){\color{orange}\frac{N_{ij}}{(\tilde{N})^{\theta_{ij}}}}\nonumber \\&{\color{red}\exp \left[2\theta_{ij} \left(\sqrt{\frac{\kappa_i\mu_j}{\theta_{ij}}} -\sqrt{\frac{\tilde{\kappa}\tilde{\mu}}{\tilde{\theta}}} \right)\sqrt{\log (\tilde{N})}\right]}\nonumber \\&
        \exp \left[2\theta_{ij}\sqrt{\frac{\kappa_i\mu_j}{\theta_{ij}}}\sqrt{\frac{\tilde{\kappa}\tilde{\mu}}{\tilde{\theta}}}+\log\left(\frac{\sqrt{1+\kappa_i}^{\mu_j-\frac{3}{2}}}{\sqrt{\kappa_i}^{\mu_j-\frac{1}{2}}}\frac{1}{2\sqrt{\pi \mu_j}}\right)\right]\nonumber \\& \exp \left[ \left(-\kappa_{i}\mu_j\right)+ \left ( \log \left ( \frac{a_N}{\bar{\gamma}} S \right )^{\frac{2\tilde{\mu}-3}{4}} \right ) \right] \nonumber \\&
        \exp \left( \theta_{ij} \frac{1}{\tilde{\theta}}\left[ \log\left( \frac{\sqrt{\tilde{\kappa}}^{\tilde{\mu}-\frac{1}{2}}}{\sqrt{1+\tilde{\kappa}}^{\tilde{\mu}-\frac{3}{2}}}\frac{2\sqrt{\pi\tilde{\mu}}}{\left(\frac{a_N}{\bar{\gamma}}\right)^{\frac{2\tilde{\mu}-3}{4}}} \right)-\tilde{\kappa}\tilde{\mu} \right] \right).
    \end{align}   
    Rewriting (\ref{eq:final})
    \begin{equation}\label{comp_1}
         \tilde{u}(\gamma)= \sum_{(i,j) \rightarrow (1,1)}^{(P \times L) }\exp\left (-\theta_{ij}\gamma\right ) P_{ij},
    \end{equation}
    where $P_{ij}$ is given by,
    \begin{multline*}
        P_{ij}=\frac{N_{ij}}{(\tilde{N})^{\theta_{ij}}}\exp \left[2\theta_{ij} \left(\sqrt{\frac{\kappa_i\mu_j}{\theta_{ij}}} -\sqrt{\frac{\tilde{\kappa}\tilde{\mu}}{\tilde{\theta}}} \right)\sqrt{\log (\tilde{N})}\right]\\
        \exp \left[2\theta_{ij}\sqrt{\frac{\kappa_i\mu_j}{\theta_{ij}}}\sqrt{\frac{\tilde{\kappa}\tilde{\mu}}{\tilde{\theta}}}+\log\left(\frac{\sqrt{1+\kappa_i}^{\mu_j-\frac{3}{2}}}{\sqrt{\kappa_i}^{\mu_j-\frac{1}{2}}}\frac{1}{2\sqrt{\pi \mu_j}}\right)\right]\\ \exp \left[ \left(-\kappa_{i}\mu_j\right)+ \left ( \log \left ( \frac{a_N}{\bar{\gamma}} S \right )^{\frac{2\tilde{\mu}-3}{4}} \right ) \right] \\
        \exp \left( \theta_{ij} \frac{1}{\tilde{\theta}}\left[ \log\left( \frac{\sqrt{\tilde{\kappa}}^{\tilde{\mu}-\frac{1}{2}}}{\sqrt{1+\tilde{\kappa}}^{\tilde{\mu}-\frac{3}{2}}}\frac{2\sqrt{\pi\tilde{\mu}}}{\left(\frac{a_N}{\bar{\gamma}}\right)^{\frac{2\tilde{\mu}-3}{4}}} \right)-\tilde{\kappa}\tilde{\mu} \right] \right).
    \end{multline*}   
    The following simplification results in a compact expression. The summation in (\ref{comp_1}) is divided into two terms: i) $(\kappa_i=\tilde{\kappa}, \mu_j=\tilde{\mu})$, ii)  $(\kappa_i\neq\tilde{\kappa}, \mu_j\neq\tilde{\mu})$. Note that $(\kappa_i=\tilde{\kappa}, \mu_j=\tilde{\mu})$ term occurs $\tilde{N}$ times in overall $N$ $(i.e. \quad \tilde{N}=\sum_{n=1}^N \mathbb{I}_{(\kappa_n=\tilde{\kappa},\mu_n=\tilde{\mu})} )$. For this case $\theta_{ij}=1$,  $\lim_{N\rightarrow \infty }\exp \left ( \log \left (\frac{a_N}{\bar{\gamma}} S \right )^{\frac{2\tilde{\mu}-3}{4}} \right )\rightarrow (\frac{a_N}{\bar{\gamma}})^{\frac{2\tilde{\mu}-3}{4}}$,  $\lim_{N\rightarrow \infty }  \exp \left[2\theta_{ij} \left(\sqrt{\frac{\kappa_i\mu_j}{\theta_{ij}}} -\sqrt{\frac{\tilde{\kappa}\tilde{\mu}}{\tilde{\theta}}} \right)\sqrt{\log (\tilde{N})}\right] \rightarrow 1$ as $\tilde{N}\rightarrow \infty$  and $\lim_{N\rightarrow \infty } \frac{N_{ij}}{\tilde{N}^{\theta_{ij}}}\rightarrow 1$. Using the above  and simplifying further, we obtain $ \tilde{u}(\gamma)_{(\kappa_i=\tilde{\kappa}, \mu_j=\tilde{\mu})}=\exp(- \gamma)  $.\\
    For $(\kappa_i\neq\tilde{\kappa},\mu_j\neq\tilde{\mu})$ , as $\theta_{ij}>1$, the term $S=\lim_{N\rightarrow \infty } \frac{\log (\tilde{N})-c_0\log (\log (\tilde{N}))+c_{1} \sqrt{\log (\tilde{N})}-c_2}{(\log (\tilde{N}))^{\theta_{ij} }} \rightarrow 0$. Therefore the term $\lim_{N\rightarrow \infty }\exp \left ( \log \left ( \frac{a_N}{\bar{\gamma}} S \right )^{\frac{2\tilde{\mu}-3}{4}} \right )$ converge to zero. 
     Also, $\lim_{N\rightarrow \infty } \frac{N_{ij}}{\tilde{N}^{\theta_{ij}}}\rightarrow 0$. Hence $ \tilde{u}(\gamma)_{(\kappa_i\neq\tilde{\kappa}, \mu_j\neq\tilde{\mu})}=0 $.
    
    Therefore, 
    \begin{equation}\label{comp_2}
        \tilde{u}(\gamma)=\exp(- \gamma) .
    \end{equation}
    Since, $ \tilde{u}(\gamma)=\exp(- \gamma) <\infty$, we are able to satisfy the condition in Lemma \ref{thm_order_stat} for a non degenerate distribution. Therefore, the normalized $k$-th maximum asymptotic distribution is given by
    \begin{equation}\label{ua3}
        \tilde{\phi}_{N-k+1}\left(\gamma\right) = \sum\limits_{m=0}^{k-1} \frac{\tilde{u}^m(\gamma)}{m!} \exp (-\tilde{u}(\gamma)),
    \end{equation}
    where  $ \tilde{u}(\gamma)=\exp(- \gamma) $.
    Note that, the upper incomplete gamma function for integer $k$ \cite{upper_gamma} is given by
    $$
   \Gamma\left ( k,x \right )=\left ( k-1 \right )!e^{-x}\sum_{r=0}^{k-1}\frac{x^{r}}{r!}.
    $$
    Using the $\Gamma \left ( k,x \right )$ and $\Gamma (x)=(x-1)! $ for integer $x$ \cite{gamma}, we can rewrite (\ref{ua3}) as
     \begin{equation}\label{kth_3}
        \tilde{\phi}_{N-k+1}\left(\gamma\right) = F_{\tilde{\gamma} _{(N-k+1:N)}}\left ( \gamma \right )=\frac{\Gamma \left ( k, \exp(- \gamma)\right )}{\Gamma \left ( k \right )}.
    \end{equation}
   Note that now, $\tilde{\phi}_{k}\left(\gamma\right) $ is a non degenerate function as $\frac{\Gamma \left ( k, \exp(- \gamma)\right )}{\Gamma \left ( k \right )}$ is not a one point distribution. 
    
	 
    \end{appendices}

	\bibliographystyle{IEEEtran}
	\bibliography{reference}

\end{document}